\newcommand{\DDC}{D_{\rm DC}}
\newcommand{\TDC}{T_{\rm DC}}
\newcommand{\PDC}{P_{\rm DC}}
\newcommand{\TL}{{\it TL}}
\newcommand{\len}{{\it len}}
\newcommand{\lambdapar}{\lambda_{\it par}}
\newcommand{\pos}{{\it pos}}
\newcommand{\modelsa}{\mathrel{\models_{\rm MDF/DC}}}
\newcommand{\modelsb}{\mathrel{\models_{\rm DC}}}
\newcommand{\SibDF}{{\it SibDF}}
\newcommand{\DFS}{{\it DFS}}
\newcommand{\eval}{{\rm eval}}
\newtheorem{definition}{Definition}
\newtheorem{theorem}{Theorem}
\newtheorem{lemma}{Lemma}
\newtheorem{corollary}{Corollary}
\newtheorem{example}{Example}
\newcommand{\qed}{\mbox{}\hfill$\Box$}
\newenvironment{proofs}
  {\begin{trivlist}\item[]\emph{Proof Sketch.\ }}
  {\qed\end{trivlist}}
\newenvironment{proof}
  {\begin{trivlist}\item[]\emph{Proof.\ }}
  {\qed\end{trivlist}}
\begin{document}

\conferenceinfo{DBPL 2013}{date, City.} 
\copyrightyear{2013} 
\copyrightdata{[to be supplied]} 


\title{XPath Satisfiability with Parent Axes or
Qualifiers Is Tractable under Many of Real-World DTDs}

\authorinfo{Yasunori Ishihara}
           {Osaka University}
           {ishihara@ist.osaka-u.ac.jp}
\authorinfo{Nobutaka Suzuki}
           {University of Tsukuba}
           {nsuzuki@slis.tsukuba.ac.jp}
\authorinfo{Kenji Hashimoto}
           {Nara Institute of Science and Technology}
           {k-hasimt@is.naist.jp}
\authorinfo{Shogo Shimizu}
           {Gakushuin Women's College}
           {shogo.shimizu@gakushuin.ac.jp}
\authorinfo{Toru Fujiwara}
           {Osaka University}
           {fujiwara@ist.osaka-u.ac.jp}

\maketitle

\begin{abstract}
This paper aims at finding a subclass of DTDs that covers
many of the real-world DTDs while offering a polynomial-time
complexity for deciding the XPath satisfiability problem.
In our previous work, we proposed RW-DTDs,
which cover most of the real-world DTDs
(26 out of 27 real-world DTDs and 1406 out of 1407 DTD rules).
However, under RW-DTDs,
XPath satisfiability with only child,
descendant-or-self, and sibling axes is tractable.

In this paper, we propose MRW-DTDs,
which are slightly smaller than RW-DTDs but
have tractability on XPath satisfiability
with parent axes or qualifiers.
MRW-DTDs are a proper superclass of
duplicate-free DTDs proposed by Montazerian et al., and
cover 24 out of the 27 real-world DTDs and
1403 out of the 1407 DTD rules.
Under MRW-DTDs, we show that
XPath satisfiability problems with
(1)~child, parent, and sibling axes, and
(2)~child and sibling axes and qualifiers
are both tractable,
which are known to be intractable under RW-DTDs.
\end{abstract}

\category{H.2.3}{Database Manaegment}{Languages}
\category{F.2.2}{Analysis of Algorithms and Problem Complexity}
                {Nonnumerical Algorithms and Problems}
\category{H.2.4}{Database Manaegment}{Systems}

\terms{Algorithms, Languages, Theory}

\keywords{XPath, satisfiability, complexity}

\section{Introduction}
\label{sec:Introduction}

XPath satisfiability is one of the major theoretical topics
in the field of XML databases.
XPath is a query language for XML documents,
where an XML document is often regarded
as an unranked labeled ordered tree.
An XPath expression specifies a pattern of (possibly branching) paths
from the root of a given XML document.
The answer to an XPath expression for an XML document $T$
is a set of nodes $v$ of $T$ such that the specified path pattern
matches the path from the root to $v$.
A given XPath expression $p$ is satisfiable under a given DTD
(Document Type Definition) $D$ if there is an XML document $T$
conforming to $D$ such that the answer to $p$ for $T$ is a nonempty
set.

One of the motivations for
research on XPath satisfiability is query optimization.
When (a part of) an XPath expression is found unsatisfiable,
we can always replace the expression with the empty set without
evaluating it.
Another motivation is to decide consistency and absolute consistency of
XML schema mappings~\cite{ABLM10,KHIF13},
which are desirable properties
for realizing XML data exchange and integration.
The decision problem of such properties can be reduced to
XPath satisfiability problem.

Unfortunately, it is known that satisfiability
under unrestricted DTDs is in P only for a
very small subclass of XPath expressions,
namely, XPath with only child axis, descendant-or-self axis,
and path union~\cite{BFG05,BFG08}.
To the best of our knowledge,
two approaches have been adopted so far
in order to resolve the intractability of XPath satisfiability.
The approach adopted by
Genev{\`e}s and Laya\"{\i}da 
is to translate XPath expressions to formulas
in monadic second-order (MSO) logic~\cite{GL06} and
in a variant of $\mu$-calculus~\cite{GL07,GLS07}.
Regular tree grammars~\cite{MLMK05},
which are a general model of XML schemas and a proper
superclass of DTDs, are also translated to such formulas.
Then, satisfiability is verified by fast decision procedures for
MSO and $\mu$-calculus formulas.
The other approach is
to find a tractable combination of XPath classes and DTD classes.
For example, Lakshmanan et al.\ examined
satisfiability under non-recursive DTDs~\cite{LRWZ04}, and
Benedikt et al.\ investigated
non-recursive and disjunction-free DTDs~\cite{BFG05,BFG08,GF05}.
However, non-recursiveness does not broaden the tractable
class of XPath.
Disjunction-freeness definitely broadens the tractable class of XPath,
but disjunction-free DTDs are too restricted from a practical
point of view.

\begin{table}[t]
\caption{The numbers of RW, MRW, DF, and DC$^{?+\#}$
rules in real-world DTDs.}
\label{tab:prac}
\medskip\par
\centering{
\begin{tabular}{l|r@{~~~}|r@{~~~~}r@{~~~~~}r@{~~~~}r@{~~~~~~~}}\hline
DTD Name  & \multicolumn{5}{c}{~Numbers of Rules~} \\\cline{2-6}
          & \multicolumn{1}{c|}{Total~} &
            \multicolumn{1}{c}{~~RW~} &
            \multicolumn{1}{c}{\textbf{MRW}} &
            \multicolumn{1}{c}{~~DF~~} &
            \multicolumn{1}{c}{DC$^{?+\#}$}
\\\hline
DBLP          &   36 & \bf   36  & \bf   36  & \bf   36 & \bf   36 \\
Ecoknowmics   &  224 & \bf  224  &      223  &      223 &      222 \\
LevelOne      &   28 & \bf   28  & \bf   28  & \bf   28 &       26 \\
MathML-2.0    &  181 & \bf  181  & \bf  181  & \bf  181 & \bf  181 \\
Mondial       &   40 & \bf   40  & \bf   40  & \bf   40 & \bf   40 \\
Music ML      &   12 & \bf   12  &       10  &       10 & \bf   12 \\
News ML       &  118 & \bf  118  & \bf  118  & \bf  118 &      114 \\
Newspaper     &    7 & \bf    7  & \bf    7  & \bf    7 & \bf    7 \\
Opml          &   15 & \bf   15  & \bf   15  & \bf   15 & \bf   15 \\
OSD           &   15 & \bf   15  & \bf   15  & \bf   15 &       14 \\
P3P-1.0       &   85 & \bf   85  & \bf   85  &       73 &       83 \\
PSD           &   66 & \bf   66  & \bf   66  & \bf   66 &       64 \\
Reed          &   16 & \bf   16  & \bf   16  & \bf   16 & \bf   16 \\
Rss           &   30 & \bf   30  & \bf   30  & \bf   30 &       29 \\
SigmodRecord  &   11 & \bf   11  & \bf   11  & \bf   11 & \bf   11 \\
SimpleDoc     &   49 & \bf   49  & \bf   49  & \bf   49 & \bf   49 \\
SSML-1.0      &   16 & \bf   16  & \bf   16  & \bf   16 & \bf   16 \\
SVG-1.1       &   80 & \bf   80  & \bf   80  &       78 &       77 \\
TV-Schedule   &   10 & \bf   10  & \bf   10  &        9 & \bf   10 \\
VoiceXML-2.0  &   62 & \bf   62  & \bf   62  & \bf   62 & \bf   62 \\
Xbel-1.0      &    9 & \bf    9  & \bf    9  & \bf    9 & \bf    9 \\
XHTML1-strict &   77 &       76  &       76  &       76 &       74 \\
XMark DTD     &   77 & \bf   77  & \bf   77  & \bf   77 &       76 \\
XML Schema    &   26 & \bf   26  & \bf   26  &       25 &       20 \\
XML Signature &   45 & \bf   45  & \bf   45  &       44 &       45 \\
XMLTV         &   40 & \bf   40  & \bf   40  & \bf   40 & \bf   40 \\
Yahoo         &   32 & \bf   32  & \bf   32  & \bf   32 & \bf   32 \\
\hline                                                      
Total         & 1407 &     1406  &     1403  &     1386 & 1380 \\\hline
\end{tabular}
}
\end{table}

There are two successful results of the latter approach.
The first one is
\emph{duplicate-free DTDs}~\cite{MWM07},
\emph{DF-DTDs} for short,
proposed by Montazerian et al.
A DTD is duplicate-free if
every tag name appears at most once in each content model
(i.e., the body of each DTD rule).
Table~\ref{tab:prac} shows an empirical survey of
real-world DTDs.
Many of the DTDs are selected according to the examination by
Montazerian et al.~\cite{MWM07}, and
several practical DTDs such as MathML and SVG
are included in the examined DTDs.
As shown in the table,
1386 out of 1407 real-world DTD rules
are duplicate-free.
Montazerian et al. also showed that
satisfiability of XPath expressions with child axis and
qualifiers is tractable~\cite{MWM07}.
Later, other several tractable XPath classes
were presented in our previous work~\cite{SF09}.
The tractability mainly stems from easiness of analyzing
\emph{non-cooccurrence among tag names}.
More formally, a subexpression $e|e'$ of a content model
specifies non-cooccurrence between the tag names in $e$ and those of
$e'$.
In DF-DTDs,
each tag name can appear at most once in the content model,
so complicated non-cooccurrence among tag names is not expressible.

The other successful result is
\emph{disjunction-capsuled DTDs}~\cite{IMSHF09},
\emph{DC-DTDs} for short, and their extension
\emph{DC$^{?+\#}$-DTDs}~\cite{IHSF12}.
A DTD is disjunction-capsuled if in each content model,
every disjunction operator appears within a scope of a Kleene star
operator.
For example, $a(b|c)^*$ is DC but $(a|b)c^*$ is not.
XPath expressions were supposed to consist of
$\downarrow$ (child axis),
$\downarrow^*$ (descendant-or-self axis),
$\uparrow$ (parent axis),
$\uparrow^*$ (ancestor-or-self axis),
$\rightarrow^+$ (following-sibling axis),
$\leftarrow^+$ (preceding-sibling axis),
$\cup$ (path union), and
$[~]$ (qualifier).
Then, it was shown that the satisfiability under DC-DTDs for
XPath expressions without upward axes or qualifiers is tractable.
The tractability is mainly from the fact that
in DC-DTDs, any non-cooccurrence of tag names is abolished by
the surrounding Kleene star operator.
DC-DTDs were extended to \emph{DC$^{?+}$-DTDs}~\cite{ISF10}
by allowing
operators ``$?$'' (zero or one occurrence) and
``$+$'' (one or more occurrences) in a restricted manner,
and then,
to \emph{DC$^{?+\#}$-DTDs}~\cite{IHSF12} by allowing
a new operator $\#$ representing ``either or both.''
Precisely, $\#$ is an $(m+l)$-ary operator and
$(a_1,\ldots,a_m)\#(b_1,\ldots,b_l)$ is equivalent to
$a_1\cdots a_mb_1^?\cdots b_l^?|a_1^?\cdots a_m^?b_1\cdots b_l$.
Especially, $a\#b$ is equivalent to $a|b|ab$,
so it means ``either or both of $a$ and $b$.''
As shown in~Table~\ref{tab:prac},
1380 out of 1407 real-world DTD rules
are DC$^{?+\#}$.
Amazingly, \emph{all} the tractability of DC-DTDs is inherited by
DC$^{?+\#}$-DTDs~\cite{ISF10,IHSF12}.

Although more than 98\% of real-world DTD rules are
DF or DC$^{?+\#}$,
the ratio of DF-DTDs or DC$^{?+\#}$-DTDs is not so high.
Table~\ref{tab:prac} shows that
8 out of the 27 DTDs are not DF, 
12 are not DC$^{?+\#}$,
and 6 are neither DF nor DC$^{?+\#}$.
To overcome this weakness,
we proposed \emph{RW-DTDs}~\cite{IHSF12},
which are a proper superclass of
both DF-DTDs and DC$^{?+\#}$-DTDs.
To be specific,
RW-DTDs are not just the union of them,
but a ``hybrid'' class of them.
In each content model $e=e_1\cdots e_n$ of an RW-DTD,
each subexpression $e_i$ is either DC$^{?+\#}$
or DF in the whole content model.
For example, $a^*(b|c)a^*$ is neither DF nor DC$^{?+\#}$,
but is RW because
the non-DC$^{?+\#}$ part $(b|c)$ is DF in the whole content model.
On the other hand,
$a^*(b|c)b^*$ is not RW because
the non-DC$^{?+\#}$ part $(b|c)$ contains $b$,
which appears twice in the whole content model.
RW-DTDs cover
26 out of the 27 real-world DTDs, 1406 out of the 1407 DTD rules
(see Table~\ref{tab:prac} again).
However, RW-DTDs do not inherit all the tractability of
the original DTD classes.
Actually,
XPath satisfiability with only child,
descendant-or-self, and sibling axes is tractable
under RW-DTDs.

This paper aims at finding a large subclass of RW-DTDs
under which XPath satisfiability becomes tractable
for a broader class of XPath expressions.
The source of the
intractability of XPath satisfiability under RW-DTDs seemed
tag name occurrence of some fixed, plural number of times~\cite{IHSF12}.
According to this observation,
in this paper we propose \emph{MRW-DTDs},
which are RW-DTDs 
such that in each content model,
each symbol appears in the scope of a repetitive operator
(i.e., $*$ or $+$)
or DF in the whole content model.
For example, $a^*ba^*$ is MRW,
but $a^*ba$ is not MRW (although it is RW)
because the rightmost $a$ is not in the scope of
any repetitive operators
or DF in the whole content model.
MRW-DTDs are still a proper superclass of DF-DTDs but
incomparable to DC$^{?+\#}$-DTDs
(see Figure~\ref{fig:DTD classes}).
MRW-DTDs cover
24 out of the 27 real-world DTDs, 1403 out of the 1407 DTD rules
(see Table~\ref{tab:prac} again).

\begin{figure}[t]
\begin{center}
\includegraphics[scale=.45]{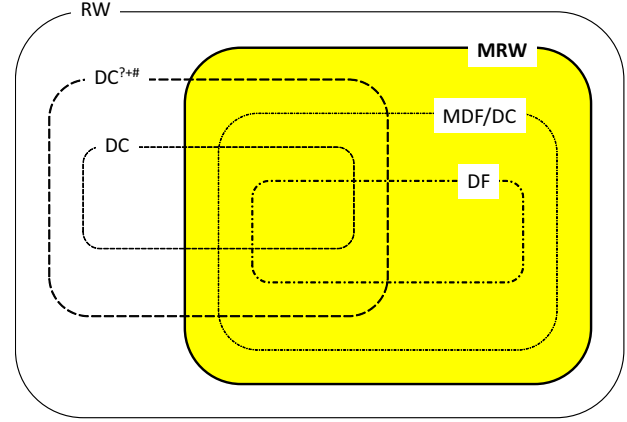}
\end{center}
\caption{Relationship among DTD classes.}
\label{fig:DTD classes}
\end{figure}

\begin{table*}[t]
\caption{Results of this paper and related works.}
\label{tab:results}
\medskip\par
\begin{center}
\newlength{\templength}
\settowidth{\templength}{$\rightarrow$}
\begin{tabular}{|c|c|c|c|c|c|c|c|c||c|c|c|c|}
\hline
\makebox[\templength][c]{$\downarrow$} &
\makebox[\templength][c]{$\downarrow^*$} &
\makebox[\templength][c]{$\uparrow$} &
\makebox[\templength][c]{$\uparrow^*$} &
\makebox[\templength][c]{$\rightarrow^+$} &
\makebox[\templength][c]{$\leftarrow^+$} &
\makebox[\templength][c]{$\cup$} &
\makebox[\templength][c]{$[~]_\wedge$} &
\makebox[\templength][c]{$[~]$} &
\begin{tabular}{c}~RW-DTDs~\end{tabular} &
\begin{tabular}{c}\textbf{MRW-DTDs}\end{tabular} &
\begin{tabular}{c}~~DF-DTDs~~\end{tabular} &
\begin{tabular}{@{~}c@{~}}DC$^{?+\#}$-DTDs\end{tabular}\\\hline\hline
+&+& & & & &+& & & P\cite{BFG05,BFG08}
                                 & P\cite{BFG05,BFG08}
                                    & P\cite{BFG05,BFG08} 
                                       & P\cite{BFG05,BFG08} 
\\\hline
+&+& & &+&+& & & & P\cite{IHSF12}  & P\cite{IHSF12}
                                    & P\cite{SF09} & P\cite{ISF10,IHSF12} 
\\\hline
+& &+& & & & & & & NPC\cite{IHSF12}
                                 & \textbf{P}
                                    & P\cite{SF09} & P\cite{ISF10,IHSF12} 
\\\hline
+& &+& &+&+& & & & NPC\cite{IHSF12}
                                 & \textbf{P}
                                    & P\cite{SF09} & P\cite{ISF10,IHSF12} 
\\\hline
+& & & & & & &+& & NPC\cite{IHSF12}
                                 & \textbf{P}
                                    & \textbf{P} & P\cite{ISF10,IHSF12} 
\\\hline
+& & & &+&+& &+& & NPC\cite{IHSF12}
                                 & \textbf{P}
                                    & \textbf{P} & P\cite{ISF10,IHSF12} 
\\\hline
+& & & & & &+&+&+& NPC\cite{BFG08,MWM07}
                                 & NPC\cite{BFG08,MWM07}
                                    & NPC\cite{BFG08,MWM07}
                                           & P\cite{ISF10,IHSF12} 
\\\hline
 &+& & & & & &+& & NPC\cite{BFG08,MWM07}
                                 & NPC\cite{BFG08,MWM07}
                                    & NPC\cite{BFG08,MWM07}
                                           & P\cite{ISF10,IHSF12} 
\\\hline
+&+&+& & & & & & & NPC\cite{SF09}  & NPC\cite{SF09}
                                    & NPC\cite{SF09} & P\cite{ISF10,IHSF12} 
\\\hline
+&+& & &+&+&+&+&+& NPC\cite{GF05,MWM07}
                                 & NPC\cite{GF05,MWM07}
                                    & NPC\cite{GF05,MWM07}
                                           & P\cite{ISF10,IHSF12} 
\\\hline
+&+&+&+&+&+&+& & & NPC\cite{GF05,MWM07}
                                 & NPC\cite{GF05,MWM07}
                                    & NPC\cite{GF05,MWM07}
                                           & P\cite{ISF10,IHSF12} 
\\\hline
+&+&+&+&+&+&+&+&+& NPC\cite{GF05,MWM07}
                                 & NPC\cite{GF05,MWM07}
                                    & NPC\cite{GF05,MWM07}
                                           & NPC\cite{GF05,ISF10} 
\\\hline
\end{tabular}
\medskip\par
NPC stands for NP-complete.
Bold letters indicate the contributions of this paper.
\end{center}
\end{table*}

Next, this paper shows that under MRW-DTDs,
XPath satisfiability problems with
(1)~child, parent, and sibling axes, and
(2)~child and sibling axes and qualifiers without disjunction
(denoted $[~]_\wedge$)
are both tractable.
Table~\ref{tab:results} summarizes
the results of this paper and related works.
Note that under RW-DTDs, satisfiability for child axes with
either parent axes or qualifiers 
is known to be NP-complete~\cite{IHSF12}.
Similarly to the case of RW-DTDs,
the decision algorithm for XPath satisfiability
under MRW-DTDs consists of the following two checks:
(1)~Check the satisfiability of a given XPath expression
under the DTD obtained by replacing
each disjunction with concatenation in a given MRW-DTD.
In other words, satisfiability is analyzed
as if the given MRW-DTD did not specify any
non-cooccurrence of tag names; and
(2)~Check that the given XPath expression does not violate
the non-cooccurrence specified by the original MRW-DTD.
The first check can be done by the efficient algorithm for
XPath satisfiability under DC-DTDs~\cite{IMSHF09,ISF10}.
To perform the second check,
we have to keep track of sets of already-traversed
sibling tag names and associate
the sets with nodes of a tree structure.
Since each tag name can appear at most once or
unboundedly many times in MRW-DTDs,
association of the sets to a tree structure 
is uniquely determined.
That enables us an efficient satisfiability checking.


The rest of this paper is organized as follows.
In Section~\ref{sec:Preliminaries}
several preliminary definitions to formalize the XPath satisfiability
problem are provided.
In Section~\ref{sec:Modeling Many of Real-World DTDs}
MRW-DTDs are proposed.
The tractability results under MRW-DTDs are
presented in Section~\ref{sec:Tractability results}.
Section~\ref{sec:Conclusions} summarizes the paper.

\section{Preliminaries}
\label{sec:Preliminaries}

\subsection{XML documents}
An XML document is represented by an unranked labeled ordered tree.
The label of a node $v$, denoted $\lambda(v)$, corresponds to a tag name.
We extend $\lambda$ to a function on sequences, i.e.,
for a sequence $v_1\cdots v_n$ of nodes,
let $\lambda(v_1\cdots v_n)=\lambda(v_1)\cdots\lambda(v_n)$.
A tree is sometimes denoted by a term, e.g.,
$a(b()c())$ denotes a tree consisting of three nodes;
the root has label $a$, and its left and right children have labels
$b$ and $c$, respectively.
Attributes are not handled in this paper.

\subsection{DTDs}

A regular expression over an alphabet $\Sigma$ consists of
constants $\epsilon$~(empty sequence) and the symbols in $\Sigma$,
and operators $\cdot$~(concatenation),
$*$~(repetition), 
$|$~(disjunction),
$?$~(zero or one occurrence), $+$~(one or more occurrences),
and $\#$ (either or both).
Here, $\#$ is an $(m+l)$-ary operator and
$(a_1,\ldots,a_m)\#(b_1,\ldots,b_l)$ is equivalent to $a_1\cdots
a_mb_1^?\cdots b_l^?|a_1^?\cdots a_m^?b_1\cdots b_l$.  
We exclude $\emptyset$~(empty set)
because we are interested in only nonempty regular languages.
The concatenation operator is often omitted as usual.
The string language represented by a regular expression $e$ is
denoted by $L(e)$.

A regular expression $e$ is \emph{duplicate-free}~\cite{MWM07} 
(\emph{DF} for short) if
every symbol in $e$ appears only once.
On the other hand, a regular expression $e$ is 
\emph{DC$^{?+\#}$}~\cite{IHSF12} if
$e$ is in the form of $e_1e_2\cdots e_n$ ($n\geq 1$),
where each $e_i$ ($1\leq i\leq n$) is either

\begin{itemize}
\item a symbol in $\Sigma$,
\item in the form of $(e_i')^*$ for a regular expression $e_i'$,
\item in the form of $(e_i')^?$ for a DC$^{?+\#}$ regular expression
  $e_i'$,
\item in the form of $(e_i')^+$ for a regular expression $e_i'$, or
\item in the form of
  $(e_{i1}',\ldots,e_{im}')\#(e_{i1}'',\ldots,e_{il}'')$ for
  DC$^{?+\#}$ regular expressions $e_{i1}',\ldots,e_{im}'$,
  $e_{i1}'',\ldots,e_{il}''$,
\end{itemize}
DC$^{?+\#}$ regular expressions are intended to exclude any
non-cooccurrence among symbols.
The argument of operators $*$ and $+$ can be an arbitrary regular
expression.
Such operators can abolish any non-cooccurrence specified
by their argument because the operators can repeat any subexpression
of their argument arbitrary times.
On the other hand, the argument of operators $?$ and $\#$
must be a
DC$^{?+\#}$ regular expression because the operators cannot repeat their
argument.
A DC$^{?+\#}$ regular expression $e$ is
\emph{disjunction-capsuled}~\cite{IMSHF09} (\emph{DC} for short) if
$e$ does not contain $?$, $+$, or $\#$.

The \emph{length} of a DC regular expression
$e=e_1e_2\cdots e_n$
is defined as the number $n$ of subexpressions of the top-level
concatenation operator, and denoted by $\len(e)$.
Moreover, $i$ ($1\leq i\leq\len(e)$) is called a $\emph{position}$ and
each $e_i$ is called the $i$-th subexpression of $e$.


\begin{definition}
A \emph{DTD} is a triple $D=(\Sigma,r,P)$, where
\begin{itemize}
\item
$\Sigma$ is a finite set of \emph{labels},
\item
$r\in\Sigma$ is the \emph{root label}, and
\item
$P$ is a mapping from $\Sigma$ to the set of regular expressions over
$\Sigma$.
Regular expression $P(a)$ is called the \emph{content model} of label $a$.
\end{itemize}
A \emph{duplicate-free DTD} (\emph{DF-DTD} for short) is a DTD
such that $P(a)$ is DF for every $a\in\Sigma$.
A \emph{disjunction-capsuled DTD} (\emph{DC-DTD} for short), is a DTD
such that $P(a)$ is DC for every $a\in\Sigma$.
A \emph{DC$^{?+\#}$-DTD} is a DTD
such that $P(a)$ is DC$^{?+\#}$ for every $a\in\Sigma$.
\end{definition}


\begin{definition}
A tree $T$ \emph{conforms} to a DTD $D=(\Sigma,r,P)$ if
\begin{itemize}
\item
the label of the root of $T$ is $r$, and
\item
for each node $v$ of $T$ and its children sequence
$v_1\cdots v_n$,
$L(P(\lambda(v)))$ contains
$\lambda(v_1\cdots v_n)$.
\end{itemize}
Let $\TL(D)$ denote the set of all the trees conforming to $D$.
\end{definition}

In this paper, we assume that every DTD $D=(\Sigma,r,P)$ contains
no useless symbols.
That is, for each $a\in\Sigma$, there is a tree $T$ conforming to $D$
such that the label of some node of $T$ is $a$.

The size of a regular expression is the number of constants and
operators appearing in the regular expression.
The size of a DTD is the sum of the sizes of all content models.

\subsection{XPath expressions}
\label{sec:XPath expressions}

The syntax of an XPath expression $p$ is defined as follows:
\begin{eqnarray*}
p & ::= & {\chi::l}\mid{p/p}\mid{p\cup p}\mid{p[q]},
\\
\chi & ::= & {\downarrow}\mid{\uparrow}\mid{\downarrow^*}
\mid{\uparrow^*}\mid{\rightarrow^+}\mid{\leftarrow^+},
\\
q & ::= & {p}\mid{q\wedge q}\mid{q\vee q},
\end{eqnarray*}
where $l\in\Sigma$.
Each $\chi\in\{{\downarrow},{\uparrow},{\downarrow^*},
{\uparrow^*},{\rightarrow^+},{\leftarrow^+}\}$
is called an \emph{axis}.
Also, a subexpression in the form of $[q]$ is called a
\emph{qualifier}.
An expression in the form of $\chi::l$ is said to be \emph{atomic}.
The \emph{size} of an XPath expression $p$ is defined as the number of
atomic subexpressions in $p$. 

The semantics of an XPath expression over a tree $T$
is defined as follows,
where $p$ and $q$ are regarded as binary and unary
predicates on paths from the root node of $T$, respectively.
In what follows, $v_0$ denotes the root of $T$, and
$v$ and $v'$ denote nodes of $T$.
Also, $w$, $w'$, and $w''$ are
nonempty sequences of nodes of $T$ starting by $v_0$,
unless otherwise stated.
\begin{itemize}
\item
$T\models(\downarrow::l)(w,wv')$
if path $wv'$ exists in $T$ and $\lambda(v')=l$.
\item
$T\models(\uparrow::l)(wv,w)$
if path $wv$ exists in $T$
and the label of the last node of $w$ is $l$.
\item
$T\models(\downarrow^*::l)(w,ww')$
if path $ww'$ exists in $T$ and the label of the last node of $ww'$
is $l$,
where $w'$ is a possibly empty sequence of nodes of $T$.
\item
$T\models(\uparrow^*::l)(ww',w)$
if path $ww'$ exists in $T$ and 
the label of the last node of $w$ is $l$,
where $w'$ is a possibly empty sequence of nodes of $T$.
\item
$T\models(\rightarrow^+::l)(wv,wv')$
if paths $wv$ and $wv'$ exist in $T$,
$v'$ is a following sibling of $v$, and $\lambda(v')=l$.
\item
$T\models(\leftarrow^+::l)(wv,wv')$
if paths $wv$ and $wv'$ exist in $T$,
$v'$ is a preceding sibling of $v$, and $\lambda(v')=l$.
\item
$T\models(p/p')(w,w')$
if there is $w''$ such that
$T\models p(w,w'')$ and
$T\models p'(w'',w')$.
\item
$T\models(p\cup p')(w,w')$
if $T\models p(w,w')$ or
$T\models p'(w,w')$.
\item
$T\models(p[q])(w,w')$
if $T\models p(w,w')$ and
$T\models q(w')$.
\item
$T\models p(w)$
if there is $w'$ such that
$T\models p(w,w')$.
\item
$T\models(q\wedge q')(w)$
if $T\models q(w)$ and
$T\models q'(w)$.
\item
$T\models(q\vee q')(w)$
if $T\models q(w)$ or
$T\models q'(w)$.
\end{itemize}

A tree $T$ \emph{satisfies} an XPath expression $p$ if
there is a node $v$ such that $T\models p(v_0,v)$,
where $v_0$ is the root node of $T$.
An XPath expression $p$ is \emph{satisfiable} under a DTD $D$
if some $T\in\TL(D)$ satisfies $p$.

In this paper, we often consider
\emph{qualifiers without disjunction}.
In this case the syntax of $p$ is simply redefined as
\begin{eqnarray*}
p & ::= & {\chi::l}\mid{p/p}\mid{p\cup p}\mid{p[p]}.
\end{eqnarray*}
Note that conjunction can be represented by a sequence of
qualifiers (e.g., $p[p'\wedge p'']$ can be represented
by $p[p'][p'']$).

Following the notation of~\cite{BFG05,BFG08},
a subclass of XPath is indicated by $\mathcal{X}(\cdot)$.
For example, the subclass with child axes and qualifiers
without disjunction is
denoted by $\mathcal{X}({\downarrow},[~]_\wedge)$.



\section{Modeling Many of Real-World DTDs}
\label{sec:Modeling Many of Real-World DTDs}

In this section, we introduce MRW-DTDs, which are a subclass of
RW-DTDs~\cite{IHSF12}.

RW-DTDs are defined as a hybrid class of DF and
DC$^{?+\#}$-DTDs. Formally, a regular expression $e$ is
\emph{RW} if $e$ is in the form of $e_1e_2\cdots e_n$
($n\geq 1$), where each $e_i$ ($1\leq i\leq n$) is either
\begin{itemize}
\item DC$^{?+\#}$; or
\item a regular expression consisting of only symbols from $\Sigma$
appearing once in $e$.
\end{itemize}
A DTD $D$ is called an \emph{RW-DTD} if
each content model of $D$ is RW.

Although RW-DTDs cover most of real-world DTDs, it is shown that
satisfiabilities of $\mathcal{X}({\downarrow},{\uparrow})$ and
$\mathcal{X}({\downarrow},[~]_\wedge)$ under RW-DTDs are both
NP-complete~\cite{IHSF12}.
This intractability is caused by non-repetitive symbols
(i.e., appearing outside the scope of any $*$ and $+$ operators) in a
DC$^{?+\#}$ part, which raise a combinatorial explosion. To handle
this problem, we define a slightly restricted version of RW-DTDs,
denoted \emph{MRW-DTDs}, in which non-repetitive symbols must appear
at most once in each context model.

\begin{definition}
  An RW-DTD $D = (\Sigma,r,P)$ is called an \emph{MRW-DTD} if for each
  content model $e$ and each symbol $a$ appearing in $e$, $a$
  appears once in $e$ whenever $a$ is outside the scope of any $*$ and $+$.
\end{definition}

\begin{example}
  Let $D = ( \{r,a,b,c\},r,P)$ be a DTD, where 
  \begin{eqnarray*}
    P(r) & = & (a|b)^*ca^+,\\
    P(a) = P(b) = P(c)  & = &\epsilon.  
  \end{eqnarray*}
  Then $D$ is an MRW-DTD.  On the other hand, consider a DTD $D' = (
  \{r,a,b,c\},r,P')$, where
  \begin{eqnarray*}
    P'(r) & = & (a|b)^*ca^?,\\
    P'(a) = P'(b) = P'(c)  & = &\epsilon.  
  \end{eqnarray*}
  Then $D'$ is RW but not MRW since in $P'(r)$ symbol $a$
  occurs twice but one of them appears outside the scope of $*$ and $+$.
\end{example}

We examined 27 real-world DTDs, 1407 rules
(see Table~\ref{tab:prac}).
During the examination, we found 6 DTD rules
which are not syntactically MRW
but can be transformed into equivalent MRW rules.
Specifically, the content models of the rules have the
following forms:
\begin{itemize}
\item
$ab^+|ab^+c$
(1 rule in Music ML),
\item
$a^*(bc^?d^?a^*|cd^?a^*|da^*)^?$
(1 rule in P3P-1.0),
\item
$a^*b^?(cdef^+|gc^?d^?e^?f^*)a^*$
(1 rule in P3P-1.0),
\item
$a(bc)^*|(bc)^+a((bc)^*)^?$
(2 rules in SVG-1.1), and
\item
$ab^?|b$
(1 rule in XML Signature).
\end{itemize}
These forms are equivalent to
\begin{itemize}
\item
$ab^+c^?$,
\item
$a^*((b\#(c\#d))a^*)^?$,
\item
$a^*b^?(g\#(c,d,e,f^+))a^*$,
\item
$(bc)^*a(bc)^*$, and
\item
$a\#b$,
\end{itemize}
respectively.
Therefore, we counted these 6 original DTD rules as MRW.

In summary, 24 out of the 27 real-world DTDs,
1403 out of the 1407 DTD rules were MRW.
Table~\ref{tab:form} shows 
the forms of the content models of the 4 DTD rules that are
not MRW.
Note that the form (F3) is not even RW.
Moreover, Music ML itself is a DC$^{?+\#}$-DTD
and therefore it is tractable for a broader class of XPath
expressions than that for MRW-DTDs.

\begin{table}[t]
\caption{The forms of the content models of the 4 rules that are
not MRW.}
\label{tab:form}
\begin{center}
\begin{tabular}{l|l}
\hline
(F1) &
$a^?b^?b^?c$
\quad(1 rule in Ecoknowmics)
\\
(F2) &
$a|aa$
\quad (2 rules in Music ML)
\\
(F3) &
$(a|b)^* ((c (a|b)^* (d (a|b)^*)^?) | (d (a|b)^* c (a|b)^*))$
\\
& \mbox{\qquad}(1 rule in XHTML1-strict)
\\
\hline
\end{tabular}
\end{center}
\end{table}


\section{Tractability Results under MRW-DTDs}
\label{sec:Tractability results}

We say that a regular expression $e$ is \emph{MDF/DC} if
$e$ is MRW but includes none of $?$, $+$, and $\#$.
An MRW-DTD is \emph{MDF/DC} if each content model is MDF/DC.
In this section, 
we first show that tractability of XPath satisfiability for MRW-DTDs is
identical to that for MDF/DC-DTDs,
if the XPath class is a subclass of
$\mathcal{X}({\downarrow},{\downarrow^\ast},{\uparrow},{\uparrow^\ast},
{\rightarrow^+},$ ${\leftarrow^+}$, ${\cup},{[\ ]})$.

Next, we provide a necessary and sufficient condition for
satisfiability of XPath expressions in
$\mathcal{X}({\downarrow},{\downarrow^\ast},{\uparrow},{\uparrow^\ast},
{\rightarrow^+},$ ${\leftarrow^+}$, ${\cup},{[\ ]})$
under MDF/DC-DTDs.
Similarly to our previous work~\cite{IMSHF09,ISF10,IHSF12},
we introduce a \emph{schema graph} of a given
MDF/DC-DTD, which represents parent-child relationship as well as the 
possible positions of the children specified by the MDF/DC-DTD.
Then we define a satisfaction relation between 
schema graphs and XPath expressions.
We show that the satisfaction relation coincides with
the satisfiability under MDF/DC-DTDs.

After that, we propose efficient algorithms for deciding
the satisfaction relation for two cases, namely,
$p\in\mathcal{X}({\downarrow},{\uparrow},{\rightarrow^+},
{\leftarrow^+})$ and
$p\in\mathcal{X}({\downarrow},{\rightarrow^+},
{\leftarrow^+},[~]_\wedge)$.
The decision algorithms consist
of the following two checks:
(1)~Check the satisfiability of $p$
under the DC-DTD obtained by replacing
disjunction with concatenation in a given MDF/DC-DTD; and
(2)~Check that $p$ does not violate
the non-cooccurrence specified by the original MDF/DC-DTD.
Actually, the satisfaction relation is defined
so that both of the checks can be done simultaneously.

\subsection{Tractability identicalness between MRW-DTDs and MDF/DC-DTDs}

First, let us review
\emph{satisfiability preservation relation}~$\sim$
discussed in~\cite{ISF10}.
Let $e$ and $e'$ be regular expressions.
We write $e\sim e'$ if they satisfy the 
following two conditions: 
\begin{itemize}
\item
every $w\in L(e)$
is a subsequence
(i.e., can be obtained by deleting zero or more symbols)
of some $w'\in L(e')$; and
\item
every $w'\in L(e')$
is a subsequence of
some $w\in L(e)$.
\end{itemize}
Let $D=(\Sigma,r,P)$ and $D'=(\Sigma,r,P')$.
We write $D\sim D'$ if $P(a)\sim P'(a)$ for each $a\in\Sigma$.
Since DTDs are assumed to have no useless symbols,
$D\sim D'$ implies that
\begin{itemize}
\item
every $T\in \TL(D)$
can be obtained by deleting zero or more subtrees of
some $T'\in \TL(D')$; and
\item
every $T'\in \TL(D')$
can be obtained by deleting zero or more subtrees of
some $T\in \TL(D)$.
\end{itemize}
Let $p\in\mathcal{X}({\downarrow},{\downarrow^\ast},{\uparrow},{\uparrow^\ast},
{\rightarrow^+},$ ${\leftarrow^+}$, ${\cup},{[\ ]})$ and
suppose that $D\sim D'$.
Then, $p$ is satisfiable under $D$ if and only if
$p$ is satisfiable under $D'$,
because our XPath class is positive (i.e., does not contain
negation operator) and not sensitive to next siblings
(i.e., cannot detect existence of nodes between two sibling nodes).
Thus, we have the following theorem:
\begin{theorem}[\cite{ISF10}]
\label{th:sim}
Suppose that classes $C$ and $C'$ of DTDs satisfy the following property:
for each DTD $D'\in C'$,
there exists $D\in C$ such that $D\sim D'$
and $D$ can be computed efficiently from $D'$.
Then, for any subclass $X$ of
$\mathcal{X}({\downarrow},{\downarrow^\ast},{\uparrow},{\uparrow^\ast},
{\rightarrow^+},$ ${\leftarrow^+}$, ${\cup},{[\ ]})$,
if the satisfiability problem for $X$ under $C$ is in P,
the same problem under $C'$ is also in P.
\end{theorem}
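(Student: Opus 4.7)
The plan is a straightforward reduction argument: on input $(D',p)$ where $D' \in C'$ and $p \in X$, first apply the hypothesis to compute in polynomial time some $D \in C$ with $D \sim D'$, then invoke the assumed polynomial-time algorithm for satisfiability of $X$ under $C$ on $(D,p)$, and return its answer. The composition is polynomial-time, so the only substantive work is to justify that the answer is correct, i.e., that $p$ is satisfiable under $D'$ if and only if $p$ is satisfiable under $D$.

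For correctness, I would first establish at the tree level the property already sketched in the excerpt: since DTDs are assumed to have no useless symbols, $D\sim D'$ implies that every $T\in \TL(D)$ is a ``subtree-deletion'' of some $T'\in\TL(D')$, and vice versa. Formally, I would define $T\preceq T'$ to mean that $T$ can be obtained from $T'$ by repeatedly deleting an entire subtree rooted at some non-root node, and prove by induction on the tree structure that $D\sim D'$ implies the two-sided embedding property (the induction step invokes $P(a)\sim P'(a)$ to match children sequences, applying the subsequence definition to the labels of the children).

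The core lemma, and the main obstacle, is then: if $T\preceq T'$ with corresponding node embedding $\iota$, and $p\in \mathcal{X}({\downarrow},{\downarrow^\ast},{\uparrow},{\uparrow^\ast},{\rightarrow^+},{\leftarrow^+},{\cup},{[\ ]})$, then $T\models p(w,w')$ implies $T'\models p(\iota(w),\iota(w'))$. I would prove this by structural induction on $p$, treating each axis separately. The cases $\downarrow,\uparrow,\downarrow^\ast,\uparrow^\ast$ are immediate because $\iota$ preserves parent--child relationships and hence ancestor--descendant relationships. The cases $\rightarrow^+,\leftarrow^+$ are exactly where the phrase ``not sensitive to next siblings'' matters: $\iota$ maps a following sibling to a following sibling even though intermediate siblings may have been inserted, precisely because these axes are transitive sibling axes rather than immediate-sibling axes. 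The $p/p'$, $p\cup p'$, and qualifier cases propagate by induction, using that qualifiers and disjunction are positive (no negation), so witnesses on the smaller tree lift to witnesses on the larger one.

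Combining the embedding result with the core lemma in both directions yields: some $T'\in\TL(D')$ satisfies $p$ iff some $T\in\TL(D)$ satisfies $p$, which is the desired equivalence of satisfiability. Since the reduction $D'\mapsto D$ is polynomial-time and the assumed decision procedure for $C$ runs in polynomial time, the overall algorithm places satisfiability of $X$ under $C'$ in P, completing the proof.
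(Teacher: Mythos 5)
Your proposal is correct and takes essentially the same route as the paper: Theorem~\ref{th:sim} is imported from~\cite{ISF10}, and the paper's own justification is exactly your argument --- compute $D\sim D'$ and reduce, using that $\sim$ (with no useless symbols) yields the two-sided subtree-deletion correspondence between $\TL(D)$ and $\TL(D')$, and that satisfaction is preserved because the XPath class is positive and uses only transitive sibling axes. Your embedding lemma with the map $\iota$, proved by structural induction on $p$, is just the formalization of the paper's remark that the class is ``positive'' and ``not sensitive to next siblings.''
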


To apply Theorem~\ref{th:sim} to MRW-DTDs,
we introduce the following mapping $\delta$:
\begin{itemize}
\item
$\delta(\epsilon)=\epsilon$,
\item
$\delta(a)=a$ for each $a\in\Sigma$,
\item
$\delta(e_1\cdot e_2)=
\delta(e_1)\cdot\delta(e_2)$,
\item
$\delta(e^*)=(\delta(e))^*$,
\item
$\delta(e_1|e_2)=\delta(e_1)|\delta(e_2)$,
\item
$\delta(e^?)=\delta(e)$,
\item
$\delta(e^+)=(\delta(e))^*$, and
\item
$\delta((e_{11},\ldots,e_{1m})\#(e_{21},\ldots,e_{2l}))$\\
\qquad
$=\delta(e_{11})\cdots\delta(e_{1m})\cdot
\delta(e_{21})\cdots\delta(e_{2l})$.
\end{itemize}
Intuitively, $\delta$
removes all the $?$ operators, and
replaces all the $+$ and $\#$ operators 
with $*$ and $\cdot$ operators, respectively.
For example, $\delta(a^*((b\#(c\#d))a^*)^?)=a^*bcda^*$.
The next lemma is almost immediate:
\begin{lemma}
For any content model $e$ of an MRW-DTD,
$\delta(e)$ is MDF/DC.
\end{lemma}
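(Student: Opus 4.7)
The plan is to verify directly, against the definition of MDF/DC (namely, MRW and free of $?$, $+$, and $\#$), that $\delta(e)$ meets both requirements when $e$ is an MRW content model.

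First I would dispatch the easy half: by inspection of the clauses defining $\delta$, the operators $?$, $+$, and $\#$ never appear in the output: $e^?$ is rewritten to $\delta(e)$, $e^+$ to $(\delta(e))^*$, and each $\#$-expression is flattened into a concatenation. So $\delta(e)$ uses only $\epsilon$, symbols, and the operators $\cdot$, $*$, $|$. Next I would record a key structural property, provable by a short induction on $e$: the mapping $\delta$ preserves the multiset of symbol occurrences, since no clause of its definition introduces or deletes a symbol. In particular, for every $a\in\Sigma$, the number of occurrences of $a$ in $\delta(e)$ equals the number in $e$.

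Second, I would show $\delta(e)$ is RW. Since $e$ is RW, we can write $e = e_1 e_2 \cdots e_n$ where each $e_i$ is either DC$^{?+\#}$ or consists only of symbols appearing once in $e$. Because $\delta$ commutes with top-level concatenation, $\delta(e) = \delta(e_1)\cdots\delta(e_n)$. If $e_i$ is DC$^{?+\#}$, a straightforward induction on the DC$^{?+\#}$ grammar shows that $\delta(e_i)$ is DC (and hence trivially DC$^{?+\#}$), because each constructor is mapped to a permissible DC constructor and the $?$- and $\#$-cases are absorbed into the surrounding concatenation. If instead $e_i$ is a DF piece of $e$, then the symbols of $\delta(e_i)$ are exactly the symbols of $e_i$, each of which appears once in $e$ and therefore, by symbol preservation, once in $\delta(e)$. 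Hence $\delta(e_i)$ is DF in $\delta(e)$, and $\delta(e)$ fits the RW form.

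Third, and this is the delicate step, I would check the MRW constraint on $\delta(e)$: every symbol that has an occurrence outside every $*$ in $\delta(e)$ must occur only once in $\delta(e)$. The crucial observation is that the only way $\delta$ produces a $*$ is by preserving an existing $*$ or by converting a $+$; the clauses for $?$ and $\#$ never introduce a $*$. Consequently, an occurrence of $a$ in $\delta(e)$ lies outside every $*$ iff the corresponding occurrence of $a$ in $e$ lies outside every $*$ and every $+$. For such an occurrence the MRW-ness of $e$ forces $a$ to appear only once in $e$, and by symbol preservation only once in $\delta(e)$. Combining with the first two steps gives that $\delta(e)$ is MDF/DC.

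The only genuinely delicate step is this last bookkeeping, because the $\#$ clause restructures its arguments into a flat concatenation and one must be careful that no occurrence which was guarded by a $*$ inside some $e_{ij}$ is inadvertently counted as outside $*$ in $\delta(e)$; this is handled by noting that the $\delta$ image of an $e_{ij}$ sits unchanged, as a subexpression, inside the resulting concatenation, so the "outside every $*$" status of each symbol occurrence is preserved exactly.
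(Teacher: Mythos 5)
Your proof is correct; the paper itself supplies no argument, dismissing the lemma as ``almost immediate,'' and your direct structural verification --- absence of $?$, $+$, and $\#$ in the image of $\delta$, preservation of symbol occurrences, DC$^{?+\#}$ pieces mapping to DC pieces, and the bookkeeping that an occurrence is outside every $*$ in $\delta(e)$ exactly when it is outside every $*$ and $+$ in $e$ --- is precisely the routine check the authors implicitly invoke. Your care with the $\#$-flattening clause is well placed and sound: since $\delta$ maps $*$ and $+$ onto $*$, never introduces a repetitive operator elsewhere, and leaves each argument's image intact as a subexpression of the resulting concatenation, the repetitive-scope status of every symbol occurrence is preserved, so the MRW once-only condition transfers from $e$ to $\delta(e)$ via occurrence preservation.
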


Moreover, $\delta$ preserves satisfiability:
\begin{lemma}[\cite{IHSF12}]
$e\sim\delta(e)$ for any regular expression $e$. 
\end{lemma}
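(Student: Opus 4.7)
The natural plan is structural induction on the regular expression $e$. Recall that to prove $e \sim \delta(e)$ we must show two subsequence properties: every $w \in L(e)$ embeds into some $w' \in L(\delta(e))$, and conversely every $w' \in L(\delta(e))$ embeds into some $w \in L(e)$. I would carry out both directions simultaneously in the induction.

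For the base cases ($e = \epsilon$ or $e = a \in \Sigma$) we have $\delta(e) = e$, so the claim is trivial. For the three ``invariant'' cases $e = e_1 \cdot e_2$, $e = e_1 \mid e_2$, and $e = e_1^\ast$, the definition of $\delta$ commutes with the operator, so I would just assemble witnesses from the inductive hypotheses: a word $u_1 u_2$ in $L(e_1 \cdot e_2)$ embeds into the concatenation of its witnesses, a disjunctive word uses the appropriate branch, and a starred word is a concatenation of $k$ pieces each handled by IH (including the case $k=0$ where $\epsilon$ embeds into $\epsilon$).

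The interesting cases are $?$, $+$, and $\#$, and here the key structural fact I would rely on is that every subexpression has nonempty language (because $\emptyset$ is excluded in the paper's conventions), so we are free to ``pad'' a word with arbitrary extra material when building a supersequence. For $e = e_1^?$, the forward direction uses that $\epsilon$ is a subsequence of any string and in particular of any $w' \in L(\delta(e_1))$, while the backward direction is immediate from $L(\delta(e_1^?)) = L(\delta(e_1))$ and IH, using that $L(e_1) \subseteq L(e_1^?)$. For $e = e_1^+$, the forward direction is analogous to the Kleene-star case; the backward direction must handle the extra word $\epsilon \in L((\delta(e_1))^\ast)$ by exhibiting any nonempty $w \in L(e_1^+)$ as its supersequence (which exists since $L(e_1)$ is nonempty).

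The most delicate case is $\#$, and I expect this to be the main obstacle. Writing $e = (e_1,\ldots,e_m) \# (f_1,\ldots,f_l)$, recall $L(e)$ consists of strings $u_1 \cdots u_m v_1 \cdots v_l$ in which either every $u_i \in L(e_i)$ and every $v_j \in L(f_j) \cup \{\epsilon\}$, or symmetrically. For the forward direction, I would take such a $w$ and use the IH to embed each present $u_i$ or $v_j$ into a word in $L(\delta(e_i))$ or $L(\delta(f_j))$, then for each factor that was $\epsilon$ I would choose an arbitrary witness in that language (guaranteed nonempty) and concatenate everything in the prescribed order to land in $L(\delta(e))$. For the backward direction, starting from $w' = u_1' \cdots u_m' v_1' \cdots v_l' \in L(\delta(e))$, IH gives supersequences $u_i \in L(e_i)$ and $v_j \in L(f_j)$, and taking all of these nonempty puts the concatenation into the first branch of the $\#$ expansion, so the supersequence lives in $L(e)$. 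After that, it only remains to remark that everything chains together with the induction and with Theorem~\ref{th:sim}.
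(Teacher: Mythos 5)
Your induction is correct, and it is the expected argument: note that this paper does not prove the lemma at all---it is imported from \cite{IHSF12}---so there is no in-paper proof to diverge from. All the genuinely load-bearing points are present in your sketch: $\delta$ commutes with $\cdot$, $|$, and $*$ so those cases are purely assembly; the $?$, $+$, and $\#$ cases reduce to the fact that every expression in this syntax has a nonempty language (since $\emptyset$ is excluded), which licenses padding $\epsilon$-factors with arbitrary witnesses in the forward direction and absorbing the extra $\epsilon\in L((\delta(e_1))^*)$ in the backward direction of $+$; and in the $\#$ case your observation that choosing all factors nonempty places the supersequence in the first branch of the expansion $a_1\cdots a_mb_1^?\cdots b_l^?\,|\,a_1^?\cdots a_m^?b_1\cdots b_l$ is exactly what closes the backward direction. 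One small correction: your closing remark that the argument ``chains together with Theorem~\ref{th:sim}'' is misplaced---the lemma is independent of that theorem; rather, the lemma (together with the observation that $\delta(e)$ is MDF/DC and computable efficiently) is what feeds \emph{into} Theorem~\ref{th:sim} to yield the corollary that satisfiability under MRW-DTDs reduces to satisfiability under MDF/DC-DTDs.
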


For a DTD $D=(\Sigma,r,P)$,
let $\delta(D)$ denote the DTD $(\Sigma,r,\delta(P))$, where
$\delta(P)(a)=\delta(P(a))$ for each $a\in\Sigma$.
By the above lemmas, we have $D\sim\delta(D)$,
and obviously $\delta(D)$ can be computed efficiently from $D$.
Moreover, $\delta(D)$ is MDF/DC if $D$ is an MRW-DTD.
Hence, from Theorem~\ref{th:sim} and the fact that
MDF/DC-DTDs are a subclass of MRW-DTDs, 
we have the following corollary:
\begin{corollary}
For any subclass $X$ of
$\mathcal{X}({\downarrow},{\downarrow^\ast},{\uparrow},{\uparrow^\ast},
{\rightarrow^+},$ ${\leftarrow^+}$, ${\cup},{[\ ]})$,
the satisfiability problem for $X$ under MRW-DTDs is in P
if and only if
the same problem under MDF/DC-DTDs is in P.
\end{corollary}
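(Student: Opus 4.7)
The plan is to establish the two directions of the biconditional separately, with the forward direction being essentially a containment remark and the backward direction being a direct application of Theorem~\ref{th:sim}.

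For the ``only if'' direction, I would simply observe that every MDF/DC-DTD is, by definition, an MRW-DTD (MDF/DC regular expressions are a syntactic restriction of MRW regular expressions obtained by forbidding the operators $?$, $+$, and $\#$). Hence any polynomial-time satisfiability algorithm that works for all MRW-DTDs also decides satisfiability on the restricted input of MDF/DC-DTDs, giving us the required membership in P.

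For the ``if'' direction, I would instantiate Theorem~\ref{th:sim} with $C$ being the class of MDF/DC-DTDs and $C'$ being the class of MRW-DTDs. The hypothesis of that theorem requires that for every $D'\in C'$ there exists $D\in C$ with $D\sim D'$ and $D$ computable efficiently from $D'$. I would exhibit such a $D$ by taking $D=\delta(D')$, where $\delta$ is the transformation defined just before the two lemmas in the paper. The first lemma guarantees that $\delta(D')$ is MDF/DC, and the second lemma together with the componentwise extension of $\sim$ to DTDs gives $D'\sim\delta(D')$. A straightforward structural induction on regular expressions (which I would only sketch) shows that $\delta$ can be computed in linear time in the size of the input DTD, so the efficiency requirement of Theorem~\ref{th:sim} is satisfied.

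Given these two ingredients, Theorem~\ref{th:sim} yields that if satisfiability for any subclass $X$ of the full XPath class $\mathcal{X}({\downarrow},{\downarrow^\ast},{\uparrow},{\uparrow^\ast},{\rightarrow^+},{\leftarrow^+},{\cup},{[\ ]})$ is in P under MDF/DC-DTDs, then it is in P under MRW-DTDs, which closes the ``if'' direction and completes the proof. I do not expect any genuine obstacle here: the two lemmas and Theorem~\ref{th:sim} have already been carefully arranged so that the corollary falls out by composing them, and the only book-keeping step is to note that $\delta$ is polynomial-time and preserves the MDF/DC restriction, both of which are immediate from the inductive definition of $\delta$.
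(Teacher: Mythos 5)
Your proposal is correct and matches the paper's own argument: the paper likewise obtains the ``if'' direction by instantiating Theorem~\ref{th:sim} with $D=\delta(D')$, using the lemma that $\delta(e)$ is MDF/DC for MRW content models and the lemma that $e\sim\delta(e)$ (plus the efficiency of $\delta$), and the ``only if'' direction from the fact that MDF/DC-DTDs are a subclass of MRW-DTDs. No gaps; your explicit note that $\delta$ is computable in linear time only makes the paper's ``obviously'' precise.
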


\subsection{Schema graphs and sibling-constraint mappings}

First, we introduce schema graphs.
Let $D$ be an MDF/DC-DTD.
Let $\DDC$ denote the DC-DTD obtained by replacing
every disjunction operator appearing outside of any Kleene stars
in a content model with the concatenation operator.
For example, a content model $(a|b(c|d)^*)ef^*$
in $D$ is replaced with $(ab(c|d)^*)ef^*$ in $\DDC$.
Then, the schema graph of $D$ is defined as that of $\DDC$.

\begin{definition}
The \emph{schema graph}~\cite{IMSHF09} $G=(U,E)$ of
a DC-DTD $\DDC=(\Sigma,r,P)$ is a directed graph
defined as follows:
\begin{itemize}
\item
A node $u\in U$ is either
\begin{itemize}
\item
$(\bot,1,-,r)$,
where $\bot$ is a new symbol not in $\Sigma$,
or
\item
$(a,i,\omega,b)$, where
$a$, $b\in\Sigma$, $1\leq i\leq\len(P(a))$ such that
$b$ appears in the $i$-th subexpression $e_i$ of $P(a)$,
and $\omega=\mbox{``$-$''}$ if $e_i$ is a single symbol in $\Sigma$
and $\omega=\mbox{``$*$''}$ otherwise.
\end{itemize}
The first, second, third and fourth components of $u$ are
denoted by $\lambdapar(u)$, $\pos(u)$, $\omega(u)$, 
and $\lambda(u)$, respectively.
Especially, $\lambda(u)$ is called the \emph{label} of $u$.
$\lambdapar$, $\pos$, and $\lambda$ are extended to functions on
sequences.
\item
An edge from $u$ to $u'$ exists in $E$ if and only if
$\lambda(u)=\lambdapar(u')$.
\end{itemize}
The \emph{schema graph} of an MDF/DC-DTD $D$ is that of
the corresponding DC-DTD $\DDC$.
\end{definition}

\begin{example}
\label{ex:schema graph}
Let $D=(\{r,a,b,c\},r,P)$ be an MDF/DC-DTD, where
\[
P(r)=r^*(a^*b|c)r^*,\quad
P(a)=\epsilon,\quad
P(b)=a,\quad
P(c)=\epsilon.
\]
Then, the corresponding DC-DTD $\DDC=(\{r,a,b,c\},r,$ $\PDC)$ is
as follows:
\[
\PDC(r)=r^*a^*bcr^*,
\PDC(a)=\epsilon,
\PDC(b)=a,
\PDC(c)=\epsilon.
\]
The schema graph $G$ of $D$ and $\DDC$ is shown in
Figure~\ref{fig:schema graph}.
\end{example}

\begin{figure}[tb]
\centering{
\includegraphics[scale=0.3]{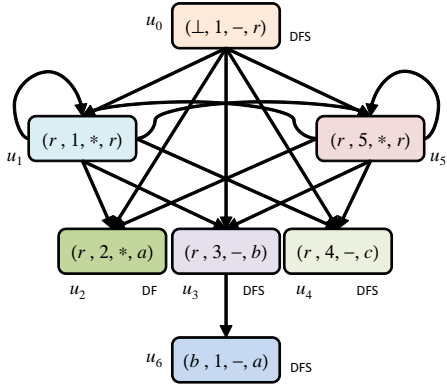}
}
\caption{Schema graph $G$.}
\label{fig:schema graph}
\end{figure}

Suppose that $\TDC\in\TL(\DDC)$ for a DC-DTD $\DDC$.
As stated in~\cite{IMSHF09},
there exists a mapping $\theta$,
called an \emph{SG mapping} of $\TDC$,
from the set of nodes of $\TDC$ to
the set of nodes of the schema graph of $\DDC$ with the following
properties:
\begin{itemize}
\item
$\theta$ maps the root node of $\TDC$ to $(\bot,1,-,r)$.
\item
Let $v$ be a node of $\TDC$ and
$v_1\cdots v_n$ be the children sequence of $v$.
Then, $\theta(v_j)=(\lambda(v),i_j,\omega_{i_j},\lambda(v_j))$, where
$1\leq i_j\leq\len(P(\lambda(v)))$,
$\omega_{i_j}=\mbox{``$-$''}$
if the $i_j$-th subexpression of $P(\lambda(v))$ is a single
symbol in $\Sigma$ and $\omega_{i_j}=\mbox{``$*$''}$ otherwise,
and $i_j\leq i_{j'}$ if $j\leq j'$.
Moreover, for every maximum subsequence $v_j\cdots v_{j'}$
such that $i_j=\cdots=i_{j'}$,
$\lambda(v_j\cdots v_{j'})$ is contained in $L(e_{i_j})$,
where $e_{i_j}$ is the $i_j$-th subexpression
of $P(\lambda(v))$.
\end{itemize}

Consider a tree $T\in\TL(D)$ for an MDF/DC-DTD $D$.
Then, there is a tree $\TDC\in\TL(\DDC)$ such that
$T$ is obtained by removing some subtrees of $\TDC$.
Hence, we can define an \emph{SG mapping} of $T$ as one of $\TDC$
whose domain is restricted to the nodes remaining in $T$.
SG mappings are extended to functions on sets and sequences
of nodes, i.e.,
$\theta(\{v_1,\ldots,v_n\})=\{\theta(v_1),\ldots,\theta(v_n)\}$ and
$\theta(v_1\cdots v_n)=\theta(v_1)\cdots\theta(v_n)$.

Let $e=e_1e_2\cdots e_n$ be an MDF/DC regular expression.
We say that each symbol in $e_i$ is \emph{DF} in $e$ if
$e_i$ is not DC.
Moreover, a DF symbol is \emph{DFS}
if it is outside the scope of $*$.
For example, consider $(a|b^*)cd^*$.
Then, $a$, $b$, and $c$ are DF but $d$ is not DF.
Also, $a$ and $c$ are DFS but $b$ is not DFS.
We define DF and DFS nodes of schema graphs in a similar way.
A path $s$ on a schema graph $G$ is \emph{DFS}
if $s$ consists of only DFS nodes of $G$
(we regard $(\bot,1,-,r)$ as DFS).

\begin{figure}[tb]
\centering{
\begin{minipage}{.2\textwidth}
\begin{center}
\includegraphics[scale=0.3]{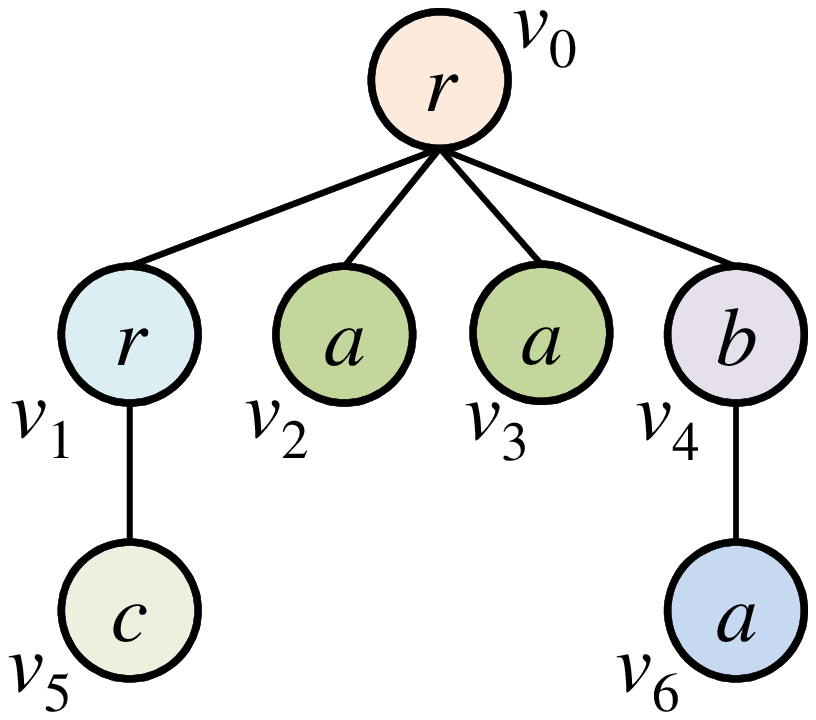}
\end{center}
\end{minipage}
\begin{minipage}{.25\textwidth}
\begin{eqnarray*}
\theta(v_0) & = & (\bot,1,-,r)
\\
\theta(v_1) & = & (r,1,*,r)
\\
\theta(v_2)=\theta(v_3) & = & (r,2,*,a)
\\
\theta(v_4) & = & (r,3,-,b)
\\
\theta(v_5) & = & (r,4,-,c)
\\
\theta(v_6) & = & (b,1,-,a)
\end{eqnarray*}
\end{minipage}
}
\caption{A tree $T$ and its SG mapping $\theta$.}
\label{fig:tree}
\end{figure}

\begin{example}
Consider the MDF/DC-DTD $D$ defined in Example~\ref{ex:schema graph},
and a tree $T\in\TL(D)$ shown in Figure~\ref{fig:tree}.
In this case, there is a unique SG mapping $\theta$ of $T$,
which is also shown in Figure~\ref{fig:tree}.
\end{example}

\begin{definition}
A \emph{sibling-constraint mapping} $\beta$
is a partial mapping from non-empty paths from
$(\bot,1,-,r)$ on $G=(U,E)$ to the powerset of
$U$ such that
\begin{enumerate}
\item
$\beta(s)$ is defined only for
a finite number of $s$;
and
\item
if defined,
$\beta(s)$ is a set of DF children of the last node of $s$.
\end{enumerate}
\end{definition}
We write $\beta\sqsupseteq\beta'$ if
$\beta(s)\supseteq\beta'(s)$ whenever
$\beta'(s)$ is defined.
Let $\beta\sqcup\beta'$ denote the least upper bound of
$\beta$ and $\beta'$ with respect to $\sqsupseteq$.

Let $T$ be a tree in $\TL(D)$ and $\theta$ be its SG mapping.
Let $\beta$ be a sibling-constraint mapping.
A pair $(T,\theta)$ \emph{satisfies} $\beta$ if
for each $s$ such that $\beta(s)$ is defined,
there is a path $w$ on $T$ such that $\theta(w)=s$ and
$\beta(s)\subseteq\theta(\SibDF_T(w))$, where
\begin{eqnarray*}
\SibDF_T(w\cdot v) & = & 
\{v'\mid\mbox{$v'$ is a child of $v$ in $T$}
\\
&&
\qquad\mbox{such that $\lambda(v')$ is DF in $P(\lambda(v))$}\}.
\end{eqnarray*}

\begin{definition}
A sibling-constraint mapping $\beta$ is \emph{consistent} if,
for each path $s\cdot u$ on $G$
such that $\beta(s\cdot u)$ is defined,
there exists a string in $L(P(\lambda(u)))$
that contains all $\lambda(u')$'s where
$u'\in\beta(s\cdot u)$.
\end{definition}

It is not difficult to see that
$\beta$ is consistent if and only if
there are a tree $T\in\TL(D)$ and its SG mapping $\theta$
such that $(T,\theta)$ satisfies $\beta$.

\begin{example}
\label{ex:sibling-constraint mapping}
Consider the schema graph $G$ in Figure~\ref{fig:schema graph}, and
let $\beta$ be the following sibling-constraint mapping:
$\beta(u_0)=\{u_2,u_3\}$,
$\beta(u_0u_1)=\{u_4\}$,
$\beta(u_0u_2)=\emptyset$, and
$\beta(u_0u_3)=\{u_6\}$.
Then, $\beta$ is consistent.
Actually, $(T,\theta)$ shown in Figure~\ref{fig:tree}
satisfies $\beta$.

Next, consider $\beta'=\beta\sqcup\{u_0\mapsto\{u_4\}\}$.
In this case, $\beta'(u_0)=\{u_2,u_3,u_4\}$ and
$\beta'$ is not consistent because
there is no string in $L(P(\lambda(u_0)))$
(i.e., $L(r^*(a^*b|c)r^*)$)
which contains
all of $\lambda(u_2)$, $\lambda(u_3)$, and $\lambda(u_4)$
(i.e., $a$, $b$, and $c$).
\end{example}

\subsection{A necessary and sufficient condition for XPath satisfiability}

We define a satisfaction relation $\modelsa$
between schema graphs and XPath expressions.
Then, we show that $\modelsa$ coincides with
XPath satisfiability under MDF/DC-DTDs.

In our previous work~\cite{ISF10},
we provided a satisfaction relation $\modelsb$
between schema graphs and XPath expressions,
and showed that $\modelsb$ coincides with
XPath satisfiability under DC-DTDs.
More precisely,
we showed that for any XPath expression
$p\in
\mathcal{X}({\downarrow},{\downarrow^\ast},{\uparrow},{\uparrow^\ast}$,
${\rightarrow^+},$ ${\leftarrow^+}$, ${\cup},{[~]})$,
$T\models p(w,w')$ if and only if
$G\modelsb p(\theta(w),\theta(w'))$,
where $\theta$ is an SG mapping of $T$.

Now, our target is MDF/DC-DTDs, so we have to analyze
non-cooccurrence specified by MDF/DC-DTDs.
To do so, we augment the parameters of $p$ by
sibling-constraint mappings introduced in the previous section.
That is, we will define $\modelsa$ so that,
roughly speaking, 
$G\modelsa p((\theta(w),\beta),(\theta(w'),\beta'))$
means that if $(T,\theta)$ satisfies $\beta$,
then $T$ satisfies $p$ at $w$ and $w'$ provided that
$(T,\theta)$ also satisfies $\beta'$.
In other words, $\beta$ is a pre-condition for $T$
before analyzing $p$,
and $\beta'$ is the post-condition for $T$
after analyzing $p$.

Actually, it is not necessary to keep
all sibling-constraint information.
Only the following cases must be handled by $\beta(s)$:
\begin{itemize}
\item
The case where $s$ is DFS.
Then, for any $T\in\TL(D)$,
$\lambda(s)$ is a unique label path on $T$ if exists.
Hence, the last node of the path can be visited many times.
So, sibling-constraint information $\beta(s)$
at $s$ must be maintained.
\item
The case where $s$ is a prefix of the ``current path''
of the analysis.
The last node of the ``current path'' can be considered as
the context node.
By using upward axes from the context node,
any ancestor node may be revisited.
So, sibling-constraint information $\beta(s)$
at such $s$ must be maintained.
\end{itemize}
On the other hand, if $s$ does not meet the two cases above,
$s$ contains a node inside the scope of some $*$.
There is no way to \emph{always} revisit the last node of $s$
in our XPath class,
sibling-constraint information $\beta(s)$
at such $s$ does not have to be maintained.

We provide the formal definition of $\modelsa$.
In what follows,
let $u$, $u'$, etc.\ be nodes of $G$, and
let $s$, $s'$, etc.\ be nonempty sequences of nodes of $G$
starting by $(\bot,1,-,r)$,
unless otherwise stated.
We introduce the following notations for readability:
\begin{eqnarray*}
\psi(u) & = &
\left\{
\begin{array}{ll}
\{u\} & \mbox{if $u$ is DF},
\\
\emptyset & \mbox{otherwise},
\end{array}
\right.
\\
\beta|_{\DFS,s}(s') & = &
\left\{
\begin{array}{ll}
\beta(s') & \mbox{if $s'$ is DFS or}
\\
& \mbox{\quad a proper prefix of $s$},
\\
\mbox{undefined} & \mbox{otherwise}.
\end{array}
\right.
\end{eqnarray*}

\begin{definition}
A satisfaction relation $\modelsa$ between a schema graph $G$
and an XPath expression
$p\in
\mathcal{X}({\downarrow},{\downarrow^\ast},{\uparrow},{\uparrow^\ast}$,
${\rightarrow^+},$ ${\leftarrow^+}$, ${\cup},{[~]})$
is defined as follows:
\begin{itemize}
\item
$G\modelsa(\downarrow::l)((s,\beta),(su',\beta'))$
if
\begin{itemize}
\item
path $su'$ exists in $G$,
\item
$\lambda(u')=l$,
\item
$\beta=\beta|_{\DFS,s}$,
\item
$\beta'=\beta\sqcup\{s\mapsto\psi(u')\}$, and
\item
both $\beta$ and $\beta'$ are consistent.
\end{itemize}
\item
$G\modelsa(\uparrow::l)((su,\beta),(s,\beta'))$
if
\begin{itemize}
\item
path $su$ exists in $G$,
\item
the label of the last node of $s$ is $l$,
\item
$\beta=\beta|_{\DFS,su}$,
\item
$\beta'=\beta|_{\DFS,s}$, and
\item
both $\beta$ and $\beta'$ are consistent.
\end{itemize}
\item
$G\modelsa(\downarrow^*::l)((s,\beta),(ss',\beta'))$
if
\begin{itemize}
\item
path $ss'$ exists in $G$,
where $s'$ is a possibly empty sequence of nodes of $G$,
\item
the label of the last node of $ss'$ is $l$,
\item
$\beta=\beta|_{\DFS,s}$,
\item
$\beta'=\beta\sqcup
\{s''\mapsto\psi(u')\mid\mbox{$s''u'$ is a prefix of $ss'$}\}$, and
\item
both $\beta$ and $\beta'$ are consistent.
\end{itemize}
\item
$G\modelsa(\uparrow^*::l)((ss',\beta),(s,\beta'))$
if
\begin{itemize}
\item
path $ss'$ exists in $G$,
where $s'$ is a possibly empty sequence of nodes of $G$,
\item
the label of the last node of $s$ is $l$,
\item
$\beta=\beta|_{\DFS,ss'}$,
\item
$\beta'=\beta|_{\DFS,s}$, and
\item
both $\beta$ and $\beta'$ are consistent.
\end{itemize}
\item
$G\modelsa(\rightarrow^+::l)((su,\beta),(su',\beta'))$ if
\begin{itemize}
\item
$\lambdapar(u)=\lambdapar(u')$,
\item
$\lambda(u')=l$,
\item
$\pos(u)<\pos(u')$ if $\omega(u)=\mbox{``$-$''}$ and
$\pos(u)\leq\pos(u')$ if $\omega(u)=\mbox{``$*$''}$,
\item
$\beta=\beta|_{\DFS,su}$,
\item
$\beta'=\beta\sqcup\{s\mapsto\psi(u')\}$, and
\item
both $\beta$ and $\beta'$ are consistent.
\end{itemize}
\item
$G\modelsa(\leftarrow^+::l)((su,\beta),(su',\beta'))$ if
\begin{itemize}
\item
$\lambdapar(u)=\lambdapar(u')$,
\item
$\lambda(u')=l$,
\item
$\pos(u)>\pos(u')$ if $\omega(u)=\mbox{``$-$''}$ and
$\pos(u)\geq\pos(u')$ if $\omega(u)=\mbox{``$*$''}$,
\item
$\beta=\beta|_{\DFS,su}$,
\item
$\beta'=\beta\sqcup\{s\mapsto\psi(u')\}$, and
\item
both $\beta$ and $\beta'$ are consistent.
\end{itemize}
\item
$G\modelsa(p/p')((s,\beta),(s',\beta'))$
if there is a pair $(s'',\beta'')$ such that
$G\modelsa p((s,\beta),(s'',\beta''))$ and
$G\modelsa p'((s'',\beta''),(s',\beta'))$.
\item
$G\modelsa(p\cup p')((s,\beta),(s',\beta'))$
if 
$G\modelsa p((s,\beta),(s',\beta'))$ or
$G\modelsa p'((s,\beta),(s',\beta'))$.
\item
$G\modelsa(p[q])((s,\beta),(s',\beta'\sqcup\beta'')$
if
$G\modelsa p((s,\beta),(s',\beta'))$,
$G\modelsa q((s',\beta''))$, and
$\beta'\sqcup\beta''$ is consistent.
\item
$G\modelsa p((s,\beta'|_{\DFS,s}))$
if there are $s'$, $\beta$, and $\beta'$ such that
$G\modelsa p((s,\beta),(s',\beta'))$.
\item
$G\modelsa (q\wedge q')((s,\beta\sqcup\beta'))$
if 
$G\modelsa q((s,\beta))$,
$G\modelsa q'((s,\beta'))$, and
$\beta\sqcup\beta'$ is consistent.
\item
$G\modelsa (q\vee q')((s,\beta))$
if 
$G\modelsa q((s,\beta))$ or
$G\modelsa q'((s,\beta))$.
\end{itemize}
\end{definition}

The following lemmas can be shown
immediately from the definition of $\modelsa$:

\begin{lemma}
\label{lem:compactness}
If $G\modelsa p((s,\beta),(s',\beta'))$,
then $\beta|_{\DFS,s}=\beta$ and
$\beta'|_{\DFS,s'}=\beta'$.
If $G\modelsa q((s,\beta))$,
then $\beta|_{\DFS,s}=\beta$.
\end{lemma}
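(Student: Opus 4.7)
The plan is to prove both claims simultaneously by structural induction on $p$, with the qualifier induction for $q$ threaded through in parallel. Two elementary observations do all the real work. First, the filter $\beta\mapsto\beta|_{\DFS,s}$ is idempotent, because the predicate ``$s''$ is DFS or a proper prefix of $s$'' is stable under reapplication. Second, if $\beta|_{\DFS,s}=\beta$ and $s'=su'$ for some child node $u'$, then $\beta\sqcup\{s\mapsto X\}$ automatically satisfies the filter condition with respect to $s'$: the new key $s$ is a proper prefix of $s'$, while every key already defined in $\beta$ is either DFS or a proper prefix of $s$, and proper prefixes of $s$ remain proper prefixes of $s'$.

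For the base cases I would inspect the six atomic-axis clauses in turn. The two upward clauses set $\beta'$ directly to some $\beta|_{\DFS,\cdot}$, so the first observation closes them. The two downward clauses and the two sibling clauses are all of the form $\beta'=\beta\sqcup\{s_0\mapsto\psi(u')\}$, where $s_0$ is a proper prefix of the target path, so the second observation applies. The $\downarrow^*::l$ clause is slightly richer because the join spans every prefix of $ss'$, but the quantification in the defining clause guarantees that each key introduced is a proper prefix of $ss'$, so the same reasoning goes through. The one place that needs a little extra care is the sibling axes, where $\beta$ is filtered against $su$ rather than $s$: here I would note that every proper prefix of $su$ is either a proper prefix of $s$ or $s$ itself, and all such paths are proper prefixes of $su'$, so no stray key survives.

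The inductive cases are uniform. For $p/p'$ the induction hypothesis on $p$ delivers $\beta''|_{\DFS,s''}=\beta''$ at the intermediate pair $(s'',\beta'')$, and a second application to $p'$ delivers the property at $(s',\beta')$. For $p\cup p'$ and $q\vee q'$ the conclusion is immediate from the hypothesis on whichever disjunct witnesses the relation. For $p[q]$ and $q\wedge q'$ the two induction hypotheses give $\beta'|_{\DFS,s'}=\beta'$ and $\beta''|_{\DFS,s'}=\beta''$ at a common path, and since restriction commutes with $\sqcup$ at a common position, the join $\beta'\sqcup\beta''$ inherits the property. The unary-predicate clause $G\modelsa p((s,\beta'|_{\DFS,s}))$ has the filter built into its statement, so the first observation finishes it. I do not expect any genuinely hard step: the lemma is a bookkeeping invariant that has been baked directly into the defining clauses of $\modelsa$, and the only potentially delicate accounting is the sibling-axis filter described above.
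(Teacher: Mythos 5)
Your proof is correct and takes essentially the same route the paper intends: the paper dismisses Lemma~\ref{lem:compactness} as ``immediate from the definition of $\modelsa$,'' and your structural induction is exactly that clause-by-clause inspection spelled out, resting on the same two observations (idempotence of $\beta\mapsto\beta|_{\DFS,s}$ and the fact that every key introduced by a $\sqcup$ is a proper prefix of the target path), with the one genuinely delicate point---the sibling-axis source filter against $su$, whose proper prefixes are the prefixes of $s$ and hence proper prefixes of $su'$---handled correctly.
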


\begin{lemma}
\label{lem:monotonicity}
Suppose that $G\modelsa p((s,\beta),(s',\beta'))$.
If $\beta(s'')$ is defined for a DFS path $s''$,
then $\beta(s'')\subseteq\beta'(s'')$.
\end{lemma}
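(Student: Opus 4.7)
The plan is to proceed by structural induction on the XPath expression $p$, splitting on which clause of the definition of $\modelsa$ was used to derive $G\modelsa p((s,\beta),(s',\beta'))$. Throughout, I will tacitly invoke Lemma~\ref{lem:compactness} to guarantee that the intermediate and final $\beta$-mappings are defined precisely on DFS paths and on proper prefixes of the respective current paths, which is what keeps the case analysis clean.

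For the base cases (atomic expressions $\chi::l$), each of the downward and sibling axes ($\downarrow$, $\downarrow^*$, $\rightarrow^+$, $\leftarrow^+$) has a defining clause of the shape $\beta'=\beta\sqcup\gamma$ for some mapping $\gamma$, whence $\beta'\sqsupseteq\beta$ and thus $\beta(s'')\subseteq\beta'(s'')$ holds on \emph{every} $s''$ for which $\beta(s'')$ is defined --- the DFS hypothesis is not needed here. For the upward axes $\uparrow$ and $\uparrow^*$, the clause sets $\beta'=\beta|_{\DFS,s}$, where $s$ is a proper prefix of the source path; but by the definition of the restriction $|_{\DFS,s}$, whenever $s''$ is DFS we have $\beta'(s'')=\beta(s'')$, so $\beta(s'')\subseteq\beta'(s'')$ holds with equality. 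This is exactly the case that forces the DFS qualification in the statement: on a non-DFS $s''$ that is a prefix of the source but not of the target, an upward step legitimately discards $\beta(s'')$.

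For the inductive cases, if $p=p_1/p_2$ then there is an intermediate $(s'',\beta'')$ with $G\modelsa p_1((s,\beta),(s'',\beta''))$ and $G\modelsa p_2((s'',\beta''),(s',\beta'))$; applying the induction hypothesis in turn to $p_1$ and $p_2$ gives $\beta(s''')\subseteq\beta''(s''')\subseteq\beta'(s''')$ on every DFS $s'''$ where $\beta(s''')$ is defined. If $p=p_1\cup p_2$, the derivation uses exactly one disjunct and the induction hypothesis applies to it directly. If $p=p_0[q]$, the clause sets the final post-condition to $\beta^\star\sqcup\beta''$, where $\beta^\star$ is the post-condition of $p_0$; the induction hypothesis on $p_0$ gives $\beta(s'')\subseteq\beta^\star(s'')$ on DFS $s''$, and since $\beta^\star\sqcup\beta''\sqsupseteq\beta^\star$ the conclusion follows.

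The only genuinely delicate point is the upward-axis base case: it is tempting to hope for monotonicity unconditionally, but the restriction operator $|_{\DFS,s}$ really does drop non-DFS entries at that step. Restricting attention to DFS $s''$ is therefore not a cosmetic assumption but precisely what makes the argument go through; every other case is purely a matter of reading off $\sqcup$ or invoking the induction hypothesis.
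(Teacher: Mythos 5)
Your proof is correct and is essentially the argument the paper has in mind: the paper dismisses Lemma~\ref{lem:monotonicity} as ``immediate from the definition of $\modelsa$,'' and your structural induction is precisely that clause-by-clause inspection, with the downward/sibling cases reading off $\beta'=\beta\sqcup\gamma\sqsupseteq\beta$ and the upward cases using that $|_{\DFS,s}$ preserves values on DFS paths. Your closing observation that the DFS hypothesis is forced exactly by the upward-axis restriction $\beta'=\beta|_{\DFS,s}$ is accurate and a worthwhile clarification of why the lemma is stated as it is.
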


Now, we show that XPath expression
$p\in
\mathcal{X}({\downarrow},{\downarrow^\ast},{\uparrow},{\uparrow^\ast},
{\rightarrow^+},$ ${\leftarrow^+}$, ${\cup},{[~]})$
is satisfiable under $D$
if and only if
$G\modelsa p(((\bot,1,-,r),\beta_\bot),(s',\beta'))$
for some $s'$ and $\beta'$,
where $\beta_\bot$ is a mapping undefined everywhere.
The following theorem corresponds to the only if part:
\begin{theorem}
\label{th:T=>G a}
Let $p\in
\mathcal{X}({\downarrow},{\downarrow^\ast},{\uparrow},{\uparrow^\ast},
{\rightarrow^+},$ ${\leftarrow^+}$, ${\cup},{[~]})$.
Let $D$ be an MDF/DC-DTD and $G$ be the schema graph of $D$.
\begin{enumerate}
\item
Suppose that
$T\models p(w,w')$ for some $T\in\TL(D)$ with an SG mapping $\theta$.
Let $\beta$ be an arbitrary mapping satisfied by $(T,\theta)$
such that $\beta=\beta|_{\DFS,\theta(w)}$.
Then, there is a mapping $\beta'$ satisfied by $(T,\theta)$
such that
$G\modelsa p((\theta(w),\beta),(\theta(w'),\beta'))$.
\item
Suppose that
$T\models q(w)$ for some $T\in\TL(D)$ with an SG mapping $\theta$.
Then, there is a mapping $\beta'$ satisfied by $(T,\theta)$
such that
$G\modelsa q((\theta(w),\beta'))$.
\end{enumerate}
\end{theorem}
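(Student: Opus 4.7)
\begin{proofs}
The plan is to prove the two parts simultaneously by mutual structural induction on $p$ and $q$, constructing the witness $\beta'$ explicitly at each step and verifying (i) that the required path/label/position conditions on $G$ hold, (ii) that the pre/post consistency conditions in the definition of $\modelsa$ are met, and (iii) that the constructed $\beta'$ is actually satisfied by $(T,\theta)$.

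For the base cases (atomic axes), I would unfold the semantics on $T$ and translate it through $\theta$. For instance, in the $\downarrow::l$ case, from $T\models(\downarrow::l)(w,wv')$ I set $u'=\theta(v')$, so by the properties of the SG mapping the path $\theta(w)u'$ exists in $G$ and carries label $l$. I define $\beta'=\beta\sqcup\{\theta(w)\mapsto\psi(u')\}$. To check that $(T,\theta)$ satisfies $\beta'$, the only new entry to verify is $\psi(u')\subseteq\theta(\SibDF_T(w))$; when $u'$ is DF this reduces to the fact that $v'$ is a child of $v$ whose label is DF in $P(\lambda(v))$, hence $v'\in\SibDF_T(w)$. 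Consistency of $\beta$ and $\beta'$ follows because whenever $\beta'(s'')$ is defined, the children sequence of the last node of the preimage path in $T$ witnesses a string in $L(P(\lambda(u'')))$ containing all required labels. The upward/sideways axis cases are analogous, with the restriction operator $|_{\DFS,\cdot}$ simply erasing entries at $s''$ that are neither DFS nor prefixes of the context path; since such entries fall outside what $\beta=\beta|_{\DFS,\theta(w)}$ or $\beta'$ ever records, this erasure preserves ``satisfied by $(T,\theta)$.''

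For the inductive cases $p/p'$, $p\cup p'$, and $p[q]$, I chain the inductive hypotheses. In the composition $p/p'$, the induction on $p$ yields an intermediate $\beta''$ satisfied by $(T,\theta)$ with the first side condition $\beta''=\beta''|_{\DFS,\theta(w'')}$ (by Lemma~\ref{lem:compactness} applied after the fact, or built directly into the inductive statement), which is exactly the hypothesis needed to invoke induction on $p'$ at $(w'',w')$. For $p\cup p'$ we just pick the branch on which $T$ witnesses the XPath semantics. For $p[q]$ we apply induction on $p$ to obtain $(s',\beta')$, then induction on $q$ at $w'$ to obtain $\beta''$, and output $\beta'\sqcup\beta''$; both components are satisfied by the same $(T,\theta)$, so their join is satisfied as well, and consistency at the joined entries follows because the common witnessing children sequences in $T$ give strings in the appropriate content languages. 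The qualifier cases reduce by Part~2 of the induction hypothesis to the corresponding path statements, with $\wedge$ and $\vee$ handled by joining or case-splitting.

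The main obstacle, and the part that requires the most bookkeeping, is verifying the \emph{consistency} of the constructed $\beta'$ (and of intermediate joins like $\beta'\sqcup\beta''$) together with the identity $\beta'=\beta'|_{\DFS,s'}$ demanded by $\modelsa$. Consistency is not automatic from ``satisfied by $(T,\theta)$,'' so I would argue it directly from the definition: whenever $\beta'(s''u'')$ is defined, the preimage node $v''$ of $u''$ in $T$ has a children sequence whose labels lie in $L(P(\lambda(u'')))$ and contain every $\lambda(u''')$ with $u'''\in\beta'(s''u'')$, since each such $u'''$ is the $\theta$-image of an actual DF child of $v''$. The restriction-to-$\DFS$ identity is maintained by construction: every join added by the axis cases is of the form $\{s\mapsto\psi(u')\}$ with $s$ either DFS or a proper prefix of the new context path, so no entries outside the allowed support are ever created; the upward axes only discard entries and thus preserve the invariant automatically.
\end{proofs}
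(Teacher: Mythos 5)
Your proposal is correct and follows essentially the same route as the paper's proof sketch: structural induction with the witness $\beta'$ built by joining $\{s\mapsto\psi(u')\}$ at each axis step, Lemma~\ref{lem:compactness} used to discharge the precondition when chaining compositions and to justify taking $\beta'\sqcup\beta''$ for qualifiers, and satisfaction by $(T,\theta)$ yielding consistency. The only cosmetic difference is that you re-derive the ``satisfied by $(T,\theta)$ implies consistent'' direction from the definitions, where the paper simply invokes its earlier remark that consistency is equivalent to satisfaction by some $(T,\theta)$.
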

\begin{proofs}
The theorem is proved by induction on the structure of $p$.

\emph{Basis.}
Suppose that $T\models (\downarrow::l)(w,wv')$
and that $(T,\theta)$ satisfies $\beta$.
If $\theta(w)$ is not DFS,
then $\beta(\theta(w))$ is undefined
since $\beta=\beta|_{\DFS,\theta(w)}$.
If $\theta(w)$ is DFS,
then $\beta(\theta(w))\cup\{\theta(v')\}$
does not violate the non-cooccurrence
because $(T,\theta)$ satisfies $\beta$ and
path $wv'$ exists in $T$.
Hence, $(T,\theta)$ also satisfies
$\beta'=\beta\sqcup\{\theta(w)\mapsto\psi(\theta(v'))\}$.
So, $G\modelsa(\downarrow::l)((\theta(w),\beta),(\theta(wv'),\beta'))$.

The other cases can be shown in a similar way.

\emph{Induction.}
Suppose that $T\models (p[q])(w,w')$
and that $(T,\theta)$ satisfies $\beta$.
By the definition of qualifiers,
$T\models p(w,w')$ and $T\models q(w')$.
Let $\beta$ be an arbitrary mapping satisfied by $(T,\theta)$
such that $\beta=\beta|_{\DFS,\theta(w)}$.
By inductive hypothesis,
there are mappings $\beta'$ and $\beta''$ satisfied by $(T,\theta)$
such that
$G\modelsa p((\theta(w),\beta),(\theta(w'),\beta'))$, and
$G\modelsa q((\theta(w'),\beta''))$.
Moreover, by Lemma~\ref{lem:compactness},
we have $\beta'=\beta'|_{\DFS,\theta(w')}$ and
$\beta''=\beta''|_{\DFS,\theta(w')}$,
and hence $\beta'\sqcup\beta''$ is satisfied by $(T,\theta)$.
This means that $\beta'\sqcup\beta''$ is consistent, and therefore,
$G\modelsa (p[q])
((\theta(w),\beta),(\theta(w'),\beta'\sqcup\beta''))$.

The other cases are similarly proved.
\end{proofs}

The if part is shown below:
\begin{theorem}
\label{th:G=>T a}
Let $p\in
\mathcal{X}({\downarrow},{\downarrow^\ast},{\uparrow},{\uparrow^\ast},
{\rightarrow^+},$ ${\leftarrow^+}$, ${\cup},{[~]})$.
Let $D$ be an MDF/DC-DTD.
\begin{enumerate}
\item
Suppose that $G\modelsa p((s,\beta),(s',\beta'))$.
Then, there are $T\in\TL(D)$, its SG mapping $\theta$,
and paths $w$ and $w'$ on $T$ such that
$\theta(w)=s$, $\theta(w')=s'$, 
$\beta$ and $\beta'$ are satisfied by $(T,\theta)$,
and $T\models p(w,w')$.
\item
Suppose that $G\modelsa q((s,\beta))$.
Then, there are $T\in\TL(D)$, its SG mapping $\theta$,
and path $w$ on $T$ such that
$\theta(w)=s$,
$\beta$ is satisfied by $(T,\theta)$,
and $T\models q(w)$.
\end{enumerate}
\end{theorem}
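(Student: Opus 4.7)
The plan is to prove both statements by structural induction on $p$ (resp.\ $q$), mirroring Theorem~\ref{th:T=>G a} but in the reverse direction: given a derivation of $G\modelsa p((s,\beta),(s',\beta'))$, I produce a tree, an SG mapping, and paths that realize it. The clauses defining $\modelsa$ have been crafted so that the consistency of every $\beta$ appearing in a subderivation is already guaranteed, which will be the main leverage for constructing witness trees.

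For the base cases, I handle each of the six axes using that consistency. For $\downarrow::l$, the consistency of $\beta'=\beta\sqcup\{s\mapsto\psi(u')\}$ yields a tree $T$ and SG mapping $\theta$ such that $(T,\theta)$ satisfies $\beta'$; since $\beta'(s)$ is defined, there is a path $w$ in $T$ with $\theta(w)=s$. If $u'$ is DF then $u'\in\psi(u')\subseteq\beta'(s)\subseteq\theta(\SibDF_T(w))$ supplies the required child $v'$; if $u'$ is $*$-scoped, I append an extra $u'$-labeled child to the last node of $w$, which leaves $\SibDF_T(w)$ unchanged (non-DF children do not enter it) and therefore preserves $\beta'$-satisfaction. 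The $\uparrow$ and $\uparrow^*$ cases use that $\beta'=\beta|_{\DFS,s}$ is a weakening of $\beta$, so any tree satisfying $\beta$ automatically satisfies $\beta'$; the sibling axes $\rightarrow^+$ and $\leftarrow^+$ follow the same template, invoking the positional requirements in the $\modelsa$ clause to realize a correctly ordered sibling.

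The inductive step splits into the union, composition, and qualifier cases, plus the parallel clauses for $q$. The union case is immediate. For $p/p'$, the inductive hypotheses supply witness trees $(T_1,\theta_1,w_1,w_1')$ with $\theta_1(w_1')=s''$ realizing $\beta,\beta''$, and $(T_2,\theta_2,w_2,w_2')$ with $\theta_2(w_2)=s''$ realizing $\beta'',\beta'$. I construct a single tree $T$ that embeds both witnesses by aligning along their DFS skeletons, which in an MDF/DC-DTD are uniquely determined by the label path since DF symbols occur at most once in each content model, and by splicing non-DFS subtrees of $T_1$ and $T_2$ into the surrounding $*$-scoped positions, which freely accept arbitrary multiplicities. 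Lemma~\ref{lem:compactness} confines each of $\beta,\beta',\beta''$ to DFS paths or proper prefixes of its context path, and Lemma~\ref{lem:monotonicity} gives $\beta\sqsubseteq\beta''\sqsubseteq\beta'$ along DFS paths, so the accumulated sibling constraints at every DFS position of $T$ form a consistent family. The qualifier case $p[q]$ proceeds analogously, using the consistency of $\beta'\sqcup\beta''$ baked into the defining clause for $\modelsa(p[q])$.

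The main obstacle will be the merging construction underlying the $p/p'$ and $p[q]$ cases: from two trees agreeing on the schema-graph image $s''$ of an intermediate path, produce a single tree embedding both. The crux is the dichotomy specific to MDF/DC-DTDs, namely that every symbol is either DF (hence appearing at most once in each content model, forcing a unique label path through any tree) or lies inside a $*$-scoped subexpression (hence admitting arbitrarily many copies). This dichotomy ensures that the DFS skeletons of $T_1$ and $T_2$ can always be glued without conflict whenever the joined sibling-constraint mapping is consistent, while the non-DFS subtrees of each witness can be replicated as needed. Once this merging lemma is in hand, the induction closes routinely.
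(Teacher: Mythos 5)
Your proposal follows essentially the same route as the paper's proof: structural induction on $p$, with base cases realized by extracting a witness tree from consistency of the sibling-constraint mapping (appending a fresh child when the target node is not DF, which leaves $\SibDF_T$ untouched), and the composition and qualifier cases handled by merging the two inductively obtained witness trees along their DFS skeletons and the shared intermediate path, justified by Lemmas~\ref{lem:compactness} and~\ref{lem:monotonicity} together with consistency of $\beta'\sqcup\beta''$ --- which is exactly the paper's merge step, with your DF-once versus $*$-scoped dichotomy being the reason the paper leaves implicit. The only slight gloss is your treatment of the upward axes, where besides the weakening $\beta\sqsupseteq\beta'=\beta|_{\DFS,s}$ one must also realize the path mapped to $su$ (resp.\ $ss'$) in the witness tree, but the machinery in your $\downarrow$ case (locate $w$ via the defined entry of the mapping, then use $\SibDF_T(w)$ or add a child mapped to $u$) covers this just as the paper does.
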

\begin{proofs}
Again, the theorem is proved by induction on the structure of $p$.

\emph{Basis.}
Suppose that 
$G\modelsa(\downarrow::l)((s,\beta),(su',\beta'))$.
Since $\beta'=\beta\sqcup\{s\mapsto\psi(u')\}$ is consistent,
there is a pair $(T,\theta)$ satisfying $\beta'$.
Moreover, since $\beta'(s)$ is defined, there is a path $w$ on $T$
such that $\theta(w)=s$ and $\beta'(s)\subseteq\theta(\SibDF_T(w))$.
If $u'$ is DF,
then $u'\in\beta'(s)$,
so $\SibDF_T(w)$ contains a node $v'$ such that $\theta(v')=u'$.
Otherwise, without destroying the properties of $T$ stated so far,
we can add a node $v'$ to $T$
as a child of the last node of $w$
so that $\theta(v')=u'$.
Hence, in both cases, there is a path $wv'$ on $T$
such that $T\models(\downarrow::l)(w,wv')$ and
$\theta(wv')=su'$.
Finally, since $\beta'\sqsupseteq\beta$, $\beta$ is also satisfied by
$(T,\theta)$.

Next, suppose that 
$G\modelsa(\uparrow::l)((su,\beta),(s,\beta'))$.
Since $\beta$ is consistent,
there is a pair $(T,\theta)$ satisfying $\beta$.
Moreover, since $\beta(s)$ is defined, there is a path $w$ on $T$
such that $\theta(w)=s$ and $\beta(s)\subseteq\SibDF_T(w)$.
If $u$ is DF, then $u\in\beta(s)$,
so $\SibDF_T(w)$ contains a node $v$ such that $\theta(v)=u$.
Otherwise, without destroying the properties of $T$ stated so far,
we can add a node $v$ to $T$
as a child of the last node of $w$
so that $\theta(v)=u$.
Hence, in both cases, there is a path $wv$ on $T$
such that $T\models(\uparrow::l)(wv,w)$ and
$\theta(wv)=su$.
Finally, since $\beta\sqsupseteq\beta'=\beta|_{\DFS,s}$,
$\beta'$ is also satisfied by
$(T,\theta)$.

The other cases can be shown in a similar way.

\emph{Induction.}
Suppose that 
$G\modelsa(p[q])((s,\beta),(s',\beta'\sqcup\beta''))$.
By the definition of $\modelsa$,
we have
$G\modelsa p((s,\beta),(s',\beta'))$ and
$G\modelsa q((s',\beta''))$.
By the inductive hypothesis,
\begin{itemize}
\item
there are $T_1\in\TL(D)$, its SG mapping $\theta_1$,
and paths $w_1$ and $w_1'$ on $T_1$ such that
$\theta_1(w_1)=s$,
$\theta_1(w_1')=s'$,
$\beta$ and $\beta'$ are satisfied by $(T_1,\theta_1)$,
and $T_1\models p(w_1,w_1')$;
and
\item
there are $T_2\in\TL(D)$, its SG mapping $\theta_2$,
and path $w_2$ on $T_2$ such that
$\theta_2(w_2)=s'$,
$\beta''$ is satisfied by $(T_2,\theta_2)$,
and $T_2\models q(w_2)$.
\end{itemize}

Let $T\in\TL(D)$ be the tree obtained by 
merging $T_1$ and $T_2$ so that
DFS paths of $T_1$ and $T_2$ are overlapped and
$w_1'$ and $w_2$ are also overlapped.
This is possible because Lemma~\ref{lem:monotonicity} holds,
$\beta'\sqcup\beta''$ is consistent,
and $\theta_1(w_1')=\theta_2(w_2)=s'$.
An SG mapping $\theta$ of $T$
can be defined as an extension of
both $\theta_1$ and $\theta_2$.
Hence, $\theta(w_1)=\theta_1(w_1)=s$,
$\theta(w_1')=\theta_1(w_1')=\theta_2(w_2)=s'$,
$\beta$ and $\beta'\sqcup\beta''$ are satisfied by $(T,\theta)$,
and $T\models (p[q])(w_1,w_1')$.


The other cases are similarly proved.
\end{proofs}

\subsection{Tractability}

In this section,
we show that the necessary and sufficient condition is
decidable in polynomial time if
$p\in\mathcal{X}({\downarrow},{\uparrow},
{\rightarrow^+},{\leftarrow^+})$ or
$p\in\mathcal{X}({\downarrow},
{\rightarrow^+},{\leftarrow^+},[~]_\wedge)$.



\subsubsection{$\mathcal{X}({\downarrow},{\uparrow},
{\rightarrow^+},{\leftarrow^+})$}

Let $p\in\mathcal{X}({\downarrow},{\uparrow},
{\rightarrow^+},{\leftarrow^+})$.
We show an efficient algorithm for
deciding whether $G\modelsa p(((\bot,1,-,r),\beta_\bot),(s',\beta'))$
for some $s'$ and $\beta'$.

Essentially, our algorithm $\eval_1$ runs in a top-down manner
with respect to the parse tree of $p$,
and computes the set of the second parameters
$(s',\beta')$ of $p$
for a given set of first parameters $(s,\beta)$.
Let $B$ denote a set of pairs of
a path on $G$ and a sibling-constraint mapping.
Formally, $\eval_1$ is defined as follows:
\[
\eval_1(p,B)=
\left\{
\begin{array}{ll}
\multicolumn{2}{l}{
\{(s',\beta')\mid\mbox{$G\modelsa p((s,\beta),(s',\beta'))$}
}
\\
\qquad\quad\mbox{for each $(s,\beta)\in B$}\}
&
\mbox{if $p$ is atomic},
\\
\eval_1(p_2,\eval_1(p_1,B))
&
\mbox{if $p=p_1/p_2$}.
\end{array}
\right.
\]
In what follows,
we show that $\eval_1(p,\{((\bot,1,-,r),\beta_\bot)\})$
runs in a polynomial time.

First, given $s$, $s'$, and $\beta$,
there is at most one $\beta'$ such that
$G\modelsa p((s,\beta)$, $(s',\beta'))$.
That is, the combination of $s$ and $\beta$
does not cause combinatorial explosion.
This property is formally stated by the following lemma:
\begin{lemma}
\label{lem:uniqueness 1a}
Let $p\in\mathcal{X}({\downarrow},{\uparrow},
{\rightarrow^+},{\leftarrow^+})$.
Suppose that
$G\modelsa p((s,\beta)$, $(s',\beta'))$ and
$G\modelsa p((s,\beta)$, $(s',\beta''))$.
Then, $\beta'=\beta''$.
\end{lemma}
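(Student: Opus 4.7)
The plan is to proceed by structural induction on $p$. Since $p\in\mathcal{X}({\downarrow},{\uparrow},{\rightarrow^+},{\leftarrow^+})$ is built from atomic axis steps by the $/$ combinator alone, induction on the grammar of $p$ will suffice.

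For the base case, $p$ is one of the four atomic axis steps. A case analysis on the defining clause of $\modelsa$ shows that $\beta'$ is given by an explicit formula: $\beta\sqcup\{s_0\mapsto\psi(u')\}$ for $\downarrow$, $\rightarrow^+$, and $\leftarrow^+$, and $\beta|_{\DFS,s_0}$ for $\uparrow$, where $s_0$ is a prefix determined by $s$ or $s'$ and $u'$ is the last node of $s'$. Hence $\beta'$ is already a function of $(s,\beta,s')$.

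For the inductive step $p=p_1/p_2$, I consider two derivations through intermediates $(s''_j,\beta''_j)$ for $j\in\{1,2\}$. Applying the induction hypothesis to $p_1$ and $p_2$ separately only tells us that $\beta''_j$ depends on $(s,\beta,s''_j)$ and that $\beta'_j$ depends on $(s''_j,\beta''_j,s')$, which is not enough, since the intermediates may genuinely differ. The plan is to close this gap by invoking two structural properties of MDF/DC schema graphs. First, by the MRW constraint, every symbol $l$ appearing in a content model $P(a)$ is either DF (occurring exactly once) or occurs only inside scopes of $\ast$; consequently, for any child $u$ of a node with label $a$, the value $\psi(u)$ depends only on $(a,\lambda(u))$ and equals either the unique DF singleton or $\emptyset$. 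Second, the DFS children of any node are distinct single-symbol subexpressions, and the same MRW constraint forbids a single-symbol subexpression from being duplicated, so DFS children have pairwise distinct labels; hence every DFS path in $G$ is uniquely determined by its sequence of labels.

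Combining these with Lemma~\ref{lem:compactness}, which confines each $\beta'_j$ to keys that are DFS paths or proper prefixes of $s'$, every surviving key is label-determined (by the second fact and by $s'$ respectively), and every $\psi$-contribution at a surviving key is label-determined (by the first fact). A careful step-by-step tracking along the derivation then shows that a $\psi$-contribution added at step $i$ survives only if its source path is either DFS or the unique prefix of $s'$ of the appropriate length, in which case the source is forced to be identical across the two derivations; all other contributions are discarded by the restriction operations implicit in $\modelsa$. Hence $\beta'_1$ and $\beta'_2$ coincide on every key, yielding $\beta'_1=\beta'_2$. The main obstacle lies in this last bookkeeping: identifying precisely which $\psi$-contributions persist through the interplay of $\sqcup$ and $|_{\DFS,\cdot}$ during the derivation, and verifying that each surviving piece is label-determined, which depends essentially on the two MDF/DC structural facts together with Lemma~\ref{lem:compactness}.
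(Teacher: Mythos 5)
Your proposal is correct in substance, and it is considerably more careful than the paper's own proof, which consists of a single sentence: the lemma is declared ``immediate from the definition of $\modelsa$'' because $p$ contains none of $\downarrow^*$, $\uparrow^*$, $\cup$, and $[~]$ --- i.e., each atomic clause computes $\beta'$ by an explicit formula from $(s,\beta)$ and the last node of $s'$. What the paper's one-liner glosses over is exactly the point you isolate: under $/$ the two derivations may pass through distinct intermediate paths $s''_1\neq s''_2$, so plain structural induction with the lemma's statement as hypothesis does not close. The ingredients you assemble to bridge this are the right ones and do suffice: a label duplicated in a content model is forced inside a star by the MRW condition, hence non-DF, so $\psi(u)$ depends only on $(\lambdapar(u),\lambda(u))$; DF symbols occur once, so DFS paths are determined by their label sequences; and Lemma~\ref{lem:compactness} confines keys to DFS paths and proper prefixes of the current path. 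The bookkeeping you flag as the main obstacle goes through because the label sequence and the length of every intermediate path are already determined by $\lambda(s)$ and the syntactic sequence of axis steps: whether a non-DFS key survives the $|_{\DFS,\cdot}$ restrictions performed at later $\uparrow$ steps is therefore decided by the shared length profile alone, a surviving non-DFS key is the length-determined proper prefix of the common $s'$, and every surviving value is a union of label-determined sets $\psi(u)$ together with the (identical) inherited entries of the shared input $\beta$. Note that this label-determinism is essentially what the paper packages separately as Lemma~\ref{lem:uniqueness 1b} (itself proved only by an appeal to induction), so your argument in effect reproves that lemma by hand for the special case of shared inputs; what your route buys is an honest justification of a claim the paper asserts without proof, at the cost of redoing the derivation-tracking that Lemma~\ref{lem:uniqueness 1b} was designed to encapsulate.
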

\begin{proof}
Immediate from the definition of $\modelsa$ since
$p$ contains none of
$\downarrow^*$, $\uparrow^*$, $\cup$, and $[~]$.
\end{proof}

Next, consider the explosion of the number of $s$.
Because of the nondeterminism of $\downarrow$,
$\rightarrow^+$, and $\leftarrow^+$,
the number of $s$ can be exponential in the size of $p$.
However, recall that we are interested in $(s',\beta')$ such that
$G\modelsa p(((\bot,1,-,r),\beta_\bot),(s',\beta'))$.
The following lemma implies that
such $s'$ is unique up to the labeling function $\lambda$.
Moreover, $\beta'$ is also unique up to $\lambda$:
\begin{lemma}
\label{lem:uniqueness 1b}
Let $p\in\mathcal{X}({\downarrow},{\uparrow},
{\rightarrow^+},{\leftarrow^+})$.
Suppose that
$G\modelsa p((s_1,\beta_1)$, $(s_1',\beta_1'))$ and
$G\modelsa p((s_2,\beta_2)$, $(s_2',\beta_2'))$,
where $\lambda(s_1)=\lambda(s_2)$ and
$\beta_1(s_1'')=\beta_2(s_2'')$ for all $s_1''$ and $s_2''$
such that $\lambda(s_1'')=\lambda(s_2'')$.
Then, $\lambda(s_1')=\lambda(s_2')$ and
$\beta_1(s_1'')=\beta_2(s_2'')$ for all $s_1''$ and $s_2''$
such that $\lambda(s_1'')=\lambda(s_2'')$.
\end{lemma}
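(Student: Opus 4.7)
The plan is to prove this by structural induction on $p$, with the key invariant that under the label-equivalence hypothesis, both the label sequence of the output path and the output sibling-constraint mapping (viewed up to label-equivalence of argument paths) are functions of the label sequence of the input path and of $p$. The crux is a small standalone observation about the schema graph of an MDF/DC-DTD: for any label pair $(a,l)$, the value $\psi(u')$ is the same for every schema-graph node $u'$ satisfying $\lambdapar(u')=a$ and $\lambda(u')=l$. This is precisely where the MDF/DC restriction is used: if $l$ is DF in $P(a)$, then by the MRW condition $l$ occurs exactly once in $P(a)$, so there is a unique such $u'$, and hence $\psi(u')$ is a singleton determined by $(a,l)$; otherwise every such $u'$ is non-DF and $\psi(u')=\emptyset$.

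For the base case $p={\downarrow::l}$, the definition of $\modelsa$ forces $s_j'=s_ju_j'$ with $\lambda(u_j')=l$, so $\lambda(s_1')=\lambda(s_1)l=\lambda(s_2)l=\lambda(s_2')$. The update $\beta_j'=\beta_j\sqcup\{s_j\mapsto\psi(u_j')\}$ combined with the observation $\psi(u_1')=\psi(u_2')$ and the label-equivalence of the $\beta_j$'s yields the label-equivalence of the $\beta_j'$'s. The sibling cases $\rightarrow^+::l$ and $\leftarrow^+::l$ are analogous; the subtlety is that $\pos(u_j)$ and $\omega(u_j)$ may differ for label-equivalent $u_j$'s (forcing different choices of $u_j'$), but every admissible $u_j'$ lies in the label-determined set of siblings of $u_j$ with label $l$ in $G$, and $\psi$ is constant on that set by the same observation. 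The case $p={\uparrow::l}$ is immediate since $s_j'$ is just $s_j$ with its last node removed and $\beta_j'=\beta_j|_{\DFS,s_j}$, both of which are determined by label sequences.

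For the inductive step $p=p_1/p_2$, the definition of $\modelsa$ factors each computation through an intermediate pair $(s_j'',\beta_j'')$. Applying the inductive hypothesis to $p_1$ on inputs $(s_j,\beta_j)$ yields label-equivalence of $(s_1'',\beta_1'')$ and $(s_2'',\beta_2'')$, and a second application to $p_2$ on these intermediates gives the conclusion. The main obstacle is the invariance of $\psi$ established at the outset: it is what prevents combinatorial ambiguity in the base cases despite the nondeterminism in the choice of $u'$, and it is exactly the property that would fail in a general regular expression where a DF label may occupy multiple positions. Together with Lemma~\ref{lem:uniqueness 1a}, this implies that the set of $(\lambda(s'),\beta')$ equivalence classes produced by $\eval_1$ has size polynomially bounded, paving the way for tractability.
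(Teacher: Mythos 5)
Your proof is correct and takes essentially the same route as the paper, whose entire proof is the one-line remark that the lemma follows by induction on the structure of $p$; your key observation that $\psi(u')$ depends only on the pair $(\lambdapar(u'),\lambda(u'))$ is exactly the fact that also makes the paper's subsequent quotient $\beta'/_\lambda$ well defined. One minor attribution slip: the uniqueness of a DF symbol's occurrence already follows from the RW definition of the non-DC parts (their symbols must appear only once in the whole content model), not from the additional MRW restriction on symbols outside the scope of $*$ and $+$ --- but this does not affect the validity of your argument.
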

\begin{proof}
The lemma can be shown by induction on the structure of $p$.
\end{proof}
Let $\beta'/_\lambda$ denote the mapping such that
$\beta'/_\lambda(\lambda(s''))=\beta'(s'')$ for any $s''$.
Operators $\sqcup$ and $|_{\DFS,s}$ and consistency
can be naturally redefined on $\beta'/_\lambda$ as long as
$\beta'=\beta'|_{\DFS,s}$.
We have to maintain only one $\beta'/_\lambda$
even if the number of $s'$ explodes.

Finally, we have to introduce a concise representation
of exponentially many $s'$.
To accomplish this, the following observation is useful:
\begin{lemma}
\label{lem:independence 1}
Let $p\in\mathcal{X}({\downarrow},{\uparrow},
{\rightarrow^+},{\leftarrow^+})$ be an atomic XPath expression.
Suppose that
$G\modelsa p((su,\beta/_\lambda)$, $(ss',\beta'/_\lambda))$.
Then, for any path $s''$ on $G$ such that $\lambda(s'')=\lambda(s)$,
we have $G\modelsa p((s''u,\beta/_\lambda)$, $(s''s',\beta'/_\lambda))$.
\end{lemma}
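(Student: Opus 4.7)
The plan is to proceed by case analysis on the axis of the atomic expression $p$, covering the four possibilities $\downarrow::l$, $\uparrow::l$, $\rightarrow^+::l$, and $\leftarrow^+::l$. For each case, I will show directly from the definition of $\modelsa$ that every defining clause is invariant under substituting the prefix $s$ by another path $s''$ with $\lambda(s'')=\lambda(s)$, so that the witness at context $(su,\beta/_\lambda)$ transports to one at $(s''u,\beta/_\lambda)$ with the same $/_\lambda$-image for the output.

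The invariance rests on three label-only features of the setup. First, the edge relation of the schema graph $G$ is determined purely by labels: an edge from $x$ to $x'$ exists iff $\lambda(x)=\lambdapar(x')$. Consequently, if $s$ admits a continuation in $G$ by some suffix (such as $u'$ for the child axis, $s'$ for sibling axes, or the reverse for $\uparrow$), the same suffix extends $s''$ to a valid path in $G$. Second, all quantities examined at a particular node --- $\lambda$, $\lambdapar$, $\pos$, $\omega$, and DF/DFS status --- are intrinsic properties of that node and are independent of the path reaching it. Third, because we reason modulo the label-indexed view $\beta/_\lambda$, the operations used in $\modelsa$ descend cleanly to the quotient: the restriction $|_{\DFS,s}$ keeps entries at DFS sequences or proper prefixes of $\lambda(s)=\lambda(s'')$; the join $\sqcup\{s\mapsto\psi(u')\}$ writes the new entry at the common label index $\lambda(s)$; and consistency at a node $s\cdot x$ is a statement about $L(P(\lambda(x)))$ and the labels appearing in the value.

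With these principles in hand, I walk through each atomic case and check the clauses of $\modelsa$ in turn. Path existence in $G$ follows from the first observation; the label, position, and $\omega$ predicates from the second; and the conditions on $\beta$, $\beta'$, and their consistency from the third. The pairs $\downarrow::l$/$\uparrow::l$ and $\rightarrow^+::l$/$\leftarrow^+::l$ are mutually symmetric. The only delicate point --- really bookkeeping rather than an obstacle --- is reconciling the path-indexed $\beta$ used in the definition of $\modelsa$ with the label-indexed $\beta/_\lambda$ named in the statement: we fix a concrete $\beta$ realizing $\beta/_\lambda$ along DFS paths and prefixes of $su$, replay the definition, and read off the label-indexed output, which by the three observations above is the same for $s$ and $s''$.
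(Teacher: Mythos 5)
Your proposal is correct and takes essentially the same route as the paper, whose entire proof is that the claim is immediate from the definition of $\modelsa$ because atomic expressions over $\downarrow,\uparrow,\rightarrow^+,\leftarrow^+$ only inspect label-determined, last-node-local data. Your three invariance observations (label-determined edges, node-intrinsic $\lambda$, $\lambdapar$, $\pos$, $\omega$, DF/DFS status, and well-definedness of $\sqcup$, $|_{\DFS,s}$, and consistency on the $/_\lambda$ quotient) are exactly the details the paper leaves implicit, so your write-up is a faithful expansion rather than a different argument.
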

\begin{proof}
Immediate from the definition of $\modelsa$ since
$p$ contains neither
$\downarrow^*$ nor $\uparrow^*$.
\end{proof}
In other words, for atomic $p$,
only the last node of $s$ is meaningful.
Hence, we use a sequence $U_0U_1\cdots U_n$ of
sets of nodes of $G$ for representing the set of $s$ or $s'$,
where $U_0=\{(\bot,1,-,r)\}$.
As usual,
$s=u_0u_1\cdots u_n$ is in $U_0U_1\cdots U_n$
if $u_i\in U_i$ for each $i$.

The following is a refined version of our algorithm $\eval_1$:

\medskip\par\noindent
$\eval_1(p,(U_0\cdots U_n,\beta/_\lambda)):$
\begin{itemize}
\item
If $p={\downarrow}::l$, then return
\[
(U_0\cdots U_nU_{n+1},\beta/_\lambda\sqcup\{s\mapsto\psi(u')\}/_\lambda),
\]
where $s$ is an arbitrary path in $U_0\cdots U_n$,
$u'$ is an arbitrary node such that $su'$ is a path on $G$
and the label of $u'$ is $l$,
and $U_{n+1}$ is the set of such nodes $u'$.
If $\beta/_\lambda\sqcup\{s\mapsto\psi(u')\}/_\lambda$ is not consistent, 
then the execution of $\eval_1$ fails
(i.e., $p$ is unsatisfiable).
\item
If $p={\uparrow}::l$, then return
\[
(U_0\cdots U_{n-1},\beta/_\lambda|_{\DFS,\lambda(s)}),
\]
where $s$ is an arbitrary path in $U_0\cdots U_{n-1}$
such that the label of the last node of $s$ is $l$.
\item
If $p={\rightarrow^+}::l$, then return
\[
(U_0\cdots U_{n-1}U_n',\beta/_\lambda\sqcup\{s\mapsto\psi(u')\}/_\lambda),
\]
where $s$ is an arbitrary path in $U_0\cdots U_{n-1}$,
$u'$ is an arbitrary node such that $su'$ is a path on $G$,
the label of $u'$ is $l$,
and there is $u\in U_n$ such that
$\pos(u)<\pos(u')$ if $\omega(u)=\mbox{``$-$''}$ and
$\pos(u)\leq\pos(u')$ if $\omega(u)=\mbox{``$*$''}$,
and $U_n'$ is the set of such nodes $u'$.
If $\beta/_\lambda\sqcup\{s\mapsto\psi(u')\}/_\lambda$ is not consistent, 
then the execution of $\eval_1$ fails.
The case of $p={\leftarrow^+}::l$ is similar.
\item
If $p=p_1/p_2$, then return
\[
\eval_1(p_2,\eval_1(p_1,(U_0\cdots U_n,\beta/_\lambda))).
\]
\end{itemize}

Let $G=(U,E)$.
It takes $O(|U|)$ time to process an atomic XPath expression.
Totally, it takes $O(|p||U|)$ time
to run $\eval_1(p,(\{(\bot,1,-,r)\},\beta_\bot/_\lambda))$.

\begin{theorem}
XPath satisfiability for
$\mathcal{X}({\downarrow},{\uparrow},
{\rightarrow^+},{\leftarrow^+})$
under MRW-DTDs is decidable in polynomial time.
\end{theorem}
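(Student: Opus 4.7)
The plan is to chain together the machinery already assembled in the preceding subsections. First, by Corollary~1, it suffices to prove tractability under MDF/DC-DTDs rather than MRW-DTDs, since $\delta$ gives a satisfiability-preserving and efficiently computable translation. Next, by Theorems~1 and~2, XPath satisfiability of $p\in\mathcal{X}({\downarrow},{\uparrow},{\rightarrow^+},{\leftarrow^+})$ under an MDF/DC-DTD $D$ with schema graph $G$ is equivalent to the existence of $(s',\beta')$ such that $G\modelsa p(((\bot,1,-,r),\beta_\bot),(s',\beta'))$. So the remaining task is to decide this $\modelsa$-question in polynomial time.

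For that, I would argue that the algorithm $\eval_1$, run on the initial input $(U_0,\beta_\bot/_\lambda)$ with $U_0=\{(\bot,1,-,r)\}$, correctly computes a concise representation of the set of all reachable $(s',\beta')$, and fails exactly when no such pair is consistent. Correctness rests on three already-established observations: Lemma~3 (uniqueness of $\beta'$ given $s,\beta,s'$) ensures that nondeterminism can only fan out in the path component; Lemma~4 (label-uniqueness of $\beta'$) justifies collapsing the family of $\beta'$'s into a single $\lambda$-indexed mapping $\beta/_\lambda$; and Lemma~5 (independence of atomic steps from everything but the last node of the context path) justifies representing the family of contexts as a sequence of sets $U_0\cdots U_n$ rather than an exponentially large set of paths. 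I would verify by structural induction on $p$ that $\eval_1$ preserves the invariant "the returned $(U_0\cdots U_m,\beta/_\lambda)$ enumerates, in $\lambda$-collapsed form, exactly those $(s',\beta')$ for which $G\modelsa p(((\bot,1,-,r),\beta_\bot),(s',\beta'))$ holds."

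For the complexity, each atomic step of $\eval_1$ extends or shrinks the sequence of sets by one and updates $\beta/_\lambda$ by a single $\sqcup$ with an entry involving a fresh $\psi(u')$; scanning $G=(U,E)$ to find admissible successors or siblings takes $O(|U|)$ time, and the consistency check for the updated $\beta/_\lambda$ can be done by testing, for the affected node $\lambda(u)$, whether $L(P(\lambda(u)))$ contains a string that covers the DF children in $\beta/_\lambda(\cdot)$. Because $D$ is MDF/DC, each DF symbol occurs at most once in its content model, so this consistency test reduces to checking which top-level subexpressions those symbols lie in and verifying compatibility, which is polynomial in $|D|$. Since there are $O(|p|)$ atomic steps, the total running time is polynomial in $|p|+|D|$.

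The main obstacle I expect is the bookkeeping for consistency of $\beta/_\lambda$: one must argue that the $\lambda$-collapse in Lemma~4 is compatible with the sibling-constraint consistency requirement, i.e., that checking consistency at the collapsed level is equivalent to checking it at every concrete path it represents. This works precisely because, in an MDF/DC-DTD, any DF label appears at most once in the relevant content model, so the set of DF siblings accumulated along paths with the same label sequence is identical regardless of which concrete path one chose; accordingly, a single consistency test per content model suffices. Once this invariant is nailed down, the theorem follows immediately from the preceding reductions and the $O(|p||U|)$ running time of $\eval_1$.
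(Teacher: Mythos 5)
Your proposal is correct and follows the paper's own proof essentially step for step: reduce MRW-DTDs to MDF/DC-DTDs via the corollary to Theorem~\ref{th:sim}, invoke Theorems~\ref{th:T=>G a} and \ref{th:G=>T a} to replace satisfiability with the $\modelsa$-question at $((\bot,1,-,r),\beta_\bot)$, and decide that question with $\eval_1$, whose polynomial running time rests on exactly the three lemmas you cite (uniqueness of $\beta'$, label-collapse to $\beta'/_\lambda$, and last-node independence yielding the $U_0\cdots U_n$ representation). Your closing observation---that consistency checking commutes with the $\lambda$-collapse because each DF symbol occurs exactly once in its content model---is precisely the point the paper makes when it remarks that $\sqcup$, $|_{\DFS,s}$, and consistency can be naturally redefined on $\beta'/_\lambda$, so nothing in your argument diverges from the paper's.
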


\begin{example}
Let $D$ be the MDF/DC-DTD given in Example~\ref{ex:schema graph}.
Consider the satisfiability of
$p=({\downarrow}::r/{\rightarrow^+}::b)/({\downarrow}::a/{\uparrow}::b)$
under $D$.
The execution of $\eval_1$ is as follows.
Recall that the schema graph of $D$ is given in
Figure~\ref{fig:schema graph}.
\begin{eqnarray*}
\lefteqn{\eval_1(p,(\{u_0\},\beta_\bot/_\lambda))}
\\
& = &
\eval_1({\downarrow}::a/{\uparrow}::b,
\eval_1({\downarrow}::r/{\rightarrow^+}::b,(\{u_0\},\beta_\bot/_\lambda)))
\\
& = &
\eval_1({\downarrow}::a/{\uparrow}::b,
\\
&&
\qquad\qquad
\eval_1({\rightarrow^+}::b,
\eval_1({\downarrow}::r,(\{u_0\},\beta_\bot/_\lambda))))
\\
& = &
\eval_1({\downarrow}::a/{\uparrow}::b,
\\
&&
\qquad\qquad
\eval_1({\rightarrow^+}::b,
(\{u_0\}\{u_1,u_5\},\{r\mapsto\emptyset\})))
\\
& = &
\eval_1({\downarrow}::a/{\uparrow}::b,
(\{u_0\}\{u_3\},\{r\mapsto\{b\}\}))
\\
& = &
\eval_1({\uparrow}::b,\eval_1({\downarrow}::a,
(\{u_0\}\{u_3\},\{r\mapsto\{b\}\})))
\\
& = &
\eval_1({\uparrow}::b,
(\{u_0\}\{u_3\}\{u_6\},\{r\mapsto\{b\},rb\mapsto\{a\}\}))
\\
& = &
(\{u_0\}\{u_3\},\{r\mapsto\{b\},rb\mapsto\{a\}\}).
\end{eqnarray*}
Hence $p$ is determined to be satisfiable.
Actually, the tree $T$ in Figure~\ref{fig:tree} satisfies $p$.

Next, consider the satisfiability $p'=p/{\rightarrow^+}::c$
under $D$.
Then, the execution of $\eval_1$ would be as follows:
\begin{eqnarray*}
\lefteqn{\eval_1(p',(\{u_0\},\beta_\bot/_\lambda))}
\\
& = &
\eval_1({\rightarrow^+}::c,
\eval_1(p,(\{u_0\},\beta_\bot/_\lambda)))
\\
& = &
\eval_1({\rightarrow^+}::c,(\{u_0\}\{u_3\},\{r\mapsto\{b\},rb\mapsto\{a\}\}))
\\
& = &
(\{u_0\}\{u_4\},\{r\mapsto\{b,c\},rb\mapsto\{a\}\}).
\end{eqnarray*}
However, $\{r\mapsto\{b,c\},rb\mapsto\{a\}\}$ is not consistent,
and hence $p'$ is determined to be unsatisfiable.
\end{example}

\subsubsection{$\mathcal{X}({\downarrow},
{\rightarrow^+},{\leftarrow^+},[~]_\wedge)$}

Let $p\in\mathcal{X}({\downarrow},
{\rightarrow^+},{\leftarrow^+},[~]_\wedge)$.
We show an efficient algorithm for
deciding whether $G\modelsa p(((\bot,1,-,r),\beta_\bot),(s',\beta'))$
for some $s'$ and $\beta'$.

For this case, our algorithm $\eval_2$ runs in a bottom-up manner
with respect to the parse tree of $p$,
and essentially computes the set of all the pairs
$((s,\beta),(s',\beta'))$ such that
$G\modelsa p((s,\beta),(s',\beta'))$.
However, a naive implementation causes
exponential runtime.
Since the same properties as Lemmas~\ref{lem:uniqueness 1a} and
\ref{lem:uniqueness 1b} hold for this XPath class,
we can use the ideas again in the previous section.
Moreover, since this XPath class contains no upward axes,
it suffices to maintain just the last nodes of $s$ and $s'$.
However, to handle path concatenations and qualifiers,
we need information which parameter of $\beta'$ is the
``current node,'' which is originally represented by $s'$.
Here, we use $\lambda(s')$ instead of $s'$ itself
to avoid explosion.
To summarize,
let us allow arbitrary (possibly empty) paths on $G$
as parameters of sibling-constraint mappings $\beta$, and
let $\beta\oslash s''$ denote a mapping such that
$(\beta\oslash s'')(s''s)=\beta(s)$.
Now, $\eval_2$ computes
all the tuples $((u,\beta/_\lambda),(u',\beta'/_\lambda),\lambda(s'))$
such that
$G\modelsa p((s''u,\beta\oslash s''),(s''s'u',\beta'\oslash s''))$
for any $s''$, where $\beta$ is the minimum mapping with respect to
$\sqsupseteq$.

The following is a formal description of our algorithm $\eval_2$:

\medskip\par\noindent
$\eval_2(p):$
\begin{itemize}
\item
If $p={\downarrow}::l$, then return the set of
\[
((u,\beta_\bot/_\lambda),
(u',\{u\mapsto\psi(u')\}/_\lambda),\lambda(u)),
\]
where $u\in U$,
$uu'$ is a path on $G$,
and the label of $u'$ is $l$.
\item
If $p={\rightarrow^+}::l$, then return the set of
\[
((u,\{\epsilon\mapsto\psi(u)\}/_\lambda),
(u',\{\epsilon\mapsto\psi(u)\cup\psi(u')\}/_\lambda),\epsilon),
\]
where $u\in U$,
$u'$ is a sibling node of $u$,
the label of $u'$ is $l$,
$\pos(u)<\pos(u')$ if $\omega(u)=\mbox{``$-$''}$ and
$\pos(u)\leq\pos(u')$ if $\omega(u)=\mbox{``$*$''}$, and
$\{\epsilon\mapsto\psi(u)\cup\psi(u')\}/_\lambda$ is consistent.
The case of $p={\leftarrow^+}::l$ is similar.
\item
If $p=p_1/p_2$, then return the set of
\[
((u_1,\beta_1/_\lambda),
(u_2,\beta_1'/_\lambda\sqcup(\beta_2'/_\lambda\oslash x_1)),x_1x_2),
\]
such that
\begin{eqnarray*}
((u_1,\beta_1/_\lambda),(u,\beta_1'/_\lambda),x_1)
& \in & \eval_2(p_1),
\\
((u,\beta_2/_\lambda),(u_2,\beta_2'/_\lambda),x_2)
& \in & \eval_2(p_2),
\end{eqnarray*}
and $(\beta_1'/_\lambda\sqcup(\beta_2'/_\lambda\oslash x_1))$
is consistent, where
\[
(\beta_2'/_\lambda\oslash x_1)(x_1x_2)=\beta_2'/_\lambda(x_2).
\]
\item
If $p=p_1[p_2]$, then return the set of
\[
((u_1,\beta_1/_\lambda),
(u,(\beta_1'/_\lambda\sqcup(\beta_2'/_\lambda\oslash x_1))|_{\DFS,x_1}),x_1),
\]
such that
\begin{eqnarray*}
((u_1,\beta_1/_\lambda),(u,\beta_1'/_\lambda),x_1)
& \in & \eval_2(p_1),
\\
((u,\beta_2/_\lambda),(u_2,\beta_2'/_\lambda),x_2)
& \in & \eval_2(p_2),
\end{eqnarray*}
and $(\beta_1'/_\lambda\sqcup(\beta_2'/_\lambda\oslash x_1))|_{\DFS,x_1}$
is consistent.
\end{itemize}

It takes $O(|U|^2)$ time to process an atomic XPath expression.
The number of output pairs for each subexpression is also $O(|U|^2)$.
Totally, it takes $O(|p||U|^4)$ time to run $\eval_2(p)$.

\begin{theorem}
XPath satisfiability for
$\mathcal{X}({\downarrow},
{\rightarrow^+},{\leftarrow^+},[~]_\wedge)$
under MRW-DTDs is decidable in polynomial time.
\end{theorem}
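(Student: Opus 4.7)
The plan is to verify that the algorithm $\eval_2$ defined just above the theorem statement is correct and runs in polynomial time, so that satisfiability reduces to checking whether $\eval_2(p)$ contains a tuple whose first component is $((\bot,1,-,r),\beta_\bot/_\lambda)$. By the necessary and sufficient condition already proved (Theorems~\ref{th:T=>G a} and~\ref{th:G=>T a}), satisfiability of $p$ is equivalent to the existence of $s',\beta'$ with $G\modelsa p(((\bot,1,-,r),\beta_\bot),(s',\beta'))$, so nonemptiness of this restricted output witnesses satisfiability.

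First I would prove, by structural induction on $p$, the invariant that $\eval_2(p)$ contains exactly the tuples $((u,\beta/_\lambda),(u',\beta'/_\lambda),\lambda(s'))$ for which there exists a prefix $s''$ on $G$ such that $G\modelsa p((s''u,\beta\oslash s''),(s''s'u',\beta'\oslash s''))$ and $\beta$ is minimal with respect to $\sqsupseteq$. The base cases $\downarrow::l$, $\rightarrow^+::l$, $\leftarrow^+::l$ are immediate from the corresponding clauses in the definition of $\modelsa$, using that this XPath class contains no $\downarrow^*$, $\uparrow^*$ (so analogues of Lemmas~\ref{lem:uniqueness 1a} and \ref{lem:uniqueness 1b} hold, letting us work with $\beta/_\lambda$ rather than $\beta$) and no upward axes (so by an analogue of Lemma~\ref{lem:independence 1} only the last node of each path is meaningful, licensing the $u$, $u'$ representation). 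The inductive cases $p_1/p_2$ and $p_1[p_2]$ amount to joining intermediate tuples whose middle node agrees and checking consistency of the combined sibling-constraint mapping; correctness here reduces to verifying that the $\oslash x_1$ shift aligns the ``current path'' of $\eval_2(p_2)$ to the branching point recorded by $p_1$.

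Second I would bound complexity. For each atomic subexpression there are at most $O(|U|^2)$ output tuples (one for each choice of $u$ and $u'$, with $\beta/_\lambda$, $\beta'/_\lambda$, and the third component determined by uniqueness), and each is constructible in $O(|U|^2)$ time, so the base cases take $O(|U|^2)$ time. For the $p_1/p_2$ and $p_1[p_2]$ cases we join two sets of size $O(|U|^2)$ over a common node, yielding at most $O(|U|^4)$ candidate combinations; consistency of $\sqcup$ can be tested in time polynomial in the DTD size, giving $O(|U|^4)$ per internal node of the parse tree and $O(|p||U|^4)$ overall, matching the bound stated in the paper.

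The main obstacle will be justifying that the $\lambda$-collapsed sibling-constraint mappings carry enough information to correctly evaluate qualifiers: the check $(\beta_1'/_\lambda\sqcup(\beta_2'/_\lambda\oslash x_1))|_{\DFS,x_1}$ must exactly capture consistency of the pre- and post-conditions that would arise if we had tracked full paths rather than their label sequences. The key point to pin down is that since each DF symbol appears only once in its content model (the defining property of MRW/MDF/DC-DTDs), distinct DFS paths on $G$ with the same label sequence impose the same sibling constraints, so collapsing by $\lambda$ is sound; the $\oslash x_1$ operation then correctly relocates the constraints produced by the qualifier body to the branch rooted at the context node, preserving the semantics prescribed by $\modelsa$.
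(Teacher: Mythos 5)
Your proposal is correct and follows essentially the same route as the paper: the theorem there is established exactly by the correctness of $\eval_2$ against the relation $\modelsa$ (whose equivalence with satisfiability is Theorems~\ref{th:T=>G a} and~\ref{th:G=>T a}, transferred from MRW to MDF/DC via Theorem~\ref{th:sim}), with the $\lambda$-collapse licensed by the analogues of Lemmas~\ref{lem:uniqueness 1a}, \ref{lem:uniqueness 1b}, and \ref{lem:independence 1}, and the same $O(|U|^2)$-tuples-per-subexpression, $O(|p||U|^4)$-total accounting. Your key observation---that DF symbols occur exactly once per content model, so DFS paths are uniquely determined by their label sequences, making the collapsed mappings $\beta/_\lambda$ lossless---is precisely the justification the paper relies on (your ``there exists $s''$'' formulation of the invariant is equivalent to the paper's ``for any $s''$'' by the independence property you invoke).
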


\begin{example}
Let $D$ be the MDF/DC-DTD given in Example~\ref{ex:schema graph}.
Consider the satisfiability of
$p={\downarrow}::r/{\rightarrow^+}::b[{\downarrow}::a]$
under $D$.
The execution of $\eval_2$ is as follows:
\begin{eqnarray*}
\eval_2({\downarrow}::r)
& = &
\{
((u_0,\beta_\bot/_\lambda),(u_1,\{r\mapsto\emptyset\}),r),
\\
&&\phantom{\{}
((u_0,\beta_\bot/_\lambda),(u_5,\{r\mapsto\emptyset\}),r),
\\
&&\phantom{\{}
((u_1,\beta_\bot/_\lambda),(u_1,\{r\mapsto\emptyset\}),r),
\\
&&\phantom{\{}
((u_1,\beta_\bot/_\lambda),(u_5,\{r\mapsto\emptyset\}),r),
\\
&&\phantom{\{}
((u_5,\beta_\bot/_\lambda),(u_1,\{r\mapsto\emptyset\}),r),
\\
&&\phantom{\{}
((u_5,\beta_\bot/_\lambda),(u_5,\{r\mapsto\emptyset\}),r)
\},
\\
\eval_2({\rightarrow^+}::b)
& = &
\{
((u_1,\{\epsilon\mapsto\emptyset\}),(u_3,\{\epsilon\mapsto\{b\}\}),\epsilon),
\\
&&\phantom{\{}
((u_2,\{\epsilon\mapsto\{a\}\}),
\\
&&\qquad\qquad\qquad
(u_3,\{\epsilon\mapsto\{a,b\}\}),\epsilon)
\},
\\
\eval_2({\downarrow}::a)
& = &
\{
((u_0,\beta_\bot/_\lambda),(u_2,\{r\mapsto\{a\}\}),r),
\\
&&\phantom{\{}
((u_1,\beta_\bot/_\lambda),(u_2,\{r\mapsto\{a\}\}),r),
\\
&&\phantom{\{}
((u_5,\beta_\bot/_\lambda),(u_2,\{r\mapsto\{a\}\}),r),
\\
&&\phantom{\{}
((u_3,\beta_\bot/_\lambda),(u_6,\{b\mapsto\{a\}\}),b)
\},
\end{eqnarray*}
\begin{eqnarray*}
\lefteqn{\eval_2({\rightarrow^+}::b[{\downarrow}::a])}
\\
& = &
\{
((u_1,\{\epsilon\mapsto\emptyset\}),
(u_3,\{\epsilon\mapsto\{b\},b\mapsto\{a\}\}),\epsilon),
\\
&&\phantom{\{}
((u_2,\{\epsilon\mapsto\{a\}\}),
(u_3,\{\epsilon\mapsto\{a,b\},b\mapsto\{a\}\}),\epsilon)
\},
\\
\lefteqn{\eval_2({\downarrow}::r/{\rightarrow^+}::b[{\downarrow}::a])}
\\
& = &
\{
((u_0,\beta_\bot/_\lambda),
(u_3,\{r\mapsto\{b\},rb\mapsto\{a\}\}),r),
\\
&&\phantom{\{}
((u_1,\beta_\bot/_\lambda),
(u_3,\{r\mapsto\{b\},rb\mapsto\{a\}\}),r),
\\
&&\phantom{\{}
((u_5,\beta_\bot/_\lambda),
(u_3,\{r\mapsto\{b\},rb\mapsto\{a\}\}),r)
\}.
\end{eqnarray*}
Since we have found $(s',\beta')$ such that
$G\modelsa p((u_0,\beta_\bot),(s',\beta'))$,
$p$ is determined to be satisfiable.
Actually, the tree $T$ in Figure~\ref{fig:tree} satisfies $p$.

Next, consider the satisfiability $p'=p/{\rightarrow^+}::c$
under $D$.
Then, the execution of $\eval_2({\rightarrow^+}::c)$ is:
\begin{eqnarray*}
\lefteqn{\eval_2({\rightarrow^+}::c)}
\\
& = &
\{
((u_1,\{\epsilon\mapsto\emptyset\}),(u_4,\{\epsilon\mapsto\{c\}\}),\epsilon),
\\
&&\phantom{\{}
((u_2,\{\epsilon\mapsto\{a\}\}),(u_4,\{\epsilon\mapsto\{a,c\}\}),\epsilon),
\\
&&\phantom{\{}
((u_3,\{\epsilon\mapsto\{b\}\}),(u_4,\{\epsilon\mapsto\{b,c\}\}),\epsilon)
\}.
\end{eqnarray*}
Note that $\{\epsilon\mapsto\{a,c\}\}$ and
$\{\epsilon\mapsto\{b,c\}\}$ are consistent
because they are undefined at non-empty paths.
Finally, the execution of $\eval_2(p/{\rightarrow^+}::c)$ would be
\begin{eqnarray*}
&&
\{
((u_0,\beta_\bot/_\lambda),
(u_4,\{r\mapsto\{b,c\},rb\mapsto\{a\}\}),r),
\\
&&\phantom{\{}
((u_1,\beta_\bot/_\lambda),
(u_4,\{r\mapsto\{b,c\},rb\mapsto\{a\}\}),r),
\\
&&\phantom{\{}
((u_5,\beta_\bot/_\lambda),
(u_4,\{r\mapsto\{b,c\},rb\mapsto\{a\}\}),r)
\}.
\end{eqnarray*}
However, since $\{r\mapsto\{b,c\},rb\mapsto\{a\}\}$ is not consistent,
$\eval_2(p')$ returns the empty set.
Hence $p'$ is determined to be unsatisfiable.
\end{example}


\section{Conclusions}
\label{sec:Conclusions}

This paper has proposed a class of DTDs,
called MRW-DTDs,
which cover many of the real-world DTDs
and have non-trivial tractability of XPath satisfiability.
To be specific,
MRW-DTDs cover
24 out of the 27 real-world DTDs, 1403 out of the 1407 DTD rules.
Under MRW-DTDs, we have shown that
satisfiability problems for
$\mathcal{X}({\downarrow},{\uparrow},{\rightarrow^+},
{\leftarrow^+})$ and
$\mathcal{X}({\downarrow},{\rightarrow^+},
{\leftarrow^+},[~]_\wedge)$ are both tractable.

Actually, we tried to show the intractability of
the union
$\mathcal{X}({\downarrow},{\uparrow},{\rightarrow^+},
{\leftarrow^+},[~]_\wedge)$ of the tractable classes.
However, we have finally found that reduction from 3SAT
to the class is very difficult.
One of our future work is to develop an efficient algorithm
for determining the satisfiability of the union class
under MRW-DTDs.

As stated in Section~\ref{sec:Introduction},
there have been two approaches to resolving the intractability of
XPath satisfiability.
The approach using fast decision procedures for MSO and
$\mu$-calculus is fairly powerful from the practical point of view.
It is reported~\cite{GL06,GL07,GLS07}
that satisfiability (and other static analysis problems
such as containment and coverage) was decided
within one second
for many XPath expressions taken from XPathMark~\cite{Fr05}.
Another important direction of the future work is empirical evaluation
of the proposed polynomial-time algorithms.



\subsection*{Acknowledgment}

The authors thank the anonymous reviewers for their insightful
and constructive comments and suggestions.
This research is supported in part by Grant-in-Aid for Scientific
Research~(C) 23500120 from Japan Society for the Promotion of Science.



\end{document}